\DeclareMathOperator*{\Alg}{\mathcal{A}}
\newcommand{\algrule}[1][.2pt]{\par\vskip.5\baselineskip\hrule height #1\par\vskip.5\baselineskip}
\spnewtheorem{observation}{Observation}{\bfseries}{\itshape}
\begin{document}
\title{Evacuating Two Robots from a Disk:\\ A Second Cut\thanks{This is the full version of the paper with the same title which will appear in the proceedings of the 26th International Colloquium on Structural Information and Communication Complexity, July 1--4, 2019, L'Aquila, Italy.}\fnmsep\thanks{Yann Disser is supported by the `Excellence Initiative' of the German Federal and State Governments and the Graduate School~CE at TU~Darmstadt.}}
%
%
\author{Yann Disser \and S\"oren Schmitt}
\authorrunning{Y. Disser and S. Schmitt}
%
\institute{Department of Mathematics, TU Darmstadt, Darmstadt, Germany \email{disser@mathematik.tu-darmstadt.de, soeren.schmitt@stud.tu-darmstadt.de}}
\maketitle              
\begin{abstract}
We present an improved algorithm for the problem of evacuating two robots from the unit disk via an unknown exit on the boundary. Robots start at the center of the disk, move at unit speed, and can only communicate locally.	Our algorithm improves previous results by Brandt et al.~[CIAC'17] by introducing a second detour through the interior of the disk.	This allows for an improved evacuation time of $5.6234$. The best known lower bound of~$5.255$ was shown by Czyzowicz et al.~[CIAC'15].
\end{abstract}
\section{Introduction}

We consider the problem of evacuating two robots from the unit disk via an unknown exit on the boundary.
The robots start at the center of the disk and move at unit speed (with infinite acceleration).
They have unlimited computing resources and we neglect the time taken to perform arbitrary calculations.
However, robots are point-shaped and only perceive the information available at their respective locations.
In particular, they can only exchange information (in no time) while colocated at the same point on the disk.
Both robots have full knowledge of the algorithms executed by either robot, and they share the same coordinate system.
The objective is to minimize the evacuation time, i.e., the time needed until both robots have reached the exit, in the worst case over all possible positions of the exit.
Note that the evacuation time for an algorithm is equal to its competitive ratio, since the shortest path to any potential exit location has length one.

This evacuation problem was first introduced by Czyzowicz et al.~\cite{A1}, who showed that the basic algorithm that moves both robots along the boundary in opposing directions achieves an evacuation time of~$5.74$ and gave a lower bound of~$5.199$.
In a follow-up paper, Czyzowicz et al.~\cite{A2A3} presented two improved algorithms with evacuation times of~$5.644$ and~$5.628$.
Both these algorithms introduce detours through the interior of the disk and may lead to a forced meeting before the exit is found.
Additionally, Czyzowicz et al.~\cite{A2A3} improved the lower bound to~$5.255$.
Brandt et al.~\cite{A4} introduced a general necessary condition for worst-case exit positions and gave a slightly improved algorithm, without forced meeting, that achieves an evacuation time of~$5.625$.
Figure~\ref{fig:BildAlgs} shows the trajectories of both robots in each of these algorithms.

\begin{figure}[hbt]
\begin{minipage}[b]{0.235\textwidth}
\centering
\scalebox{0.28}{\includegraphics{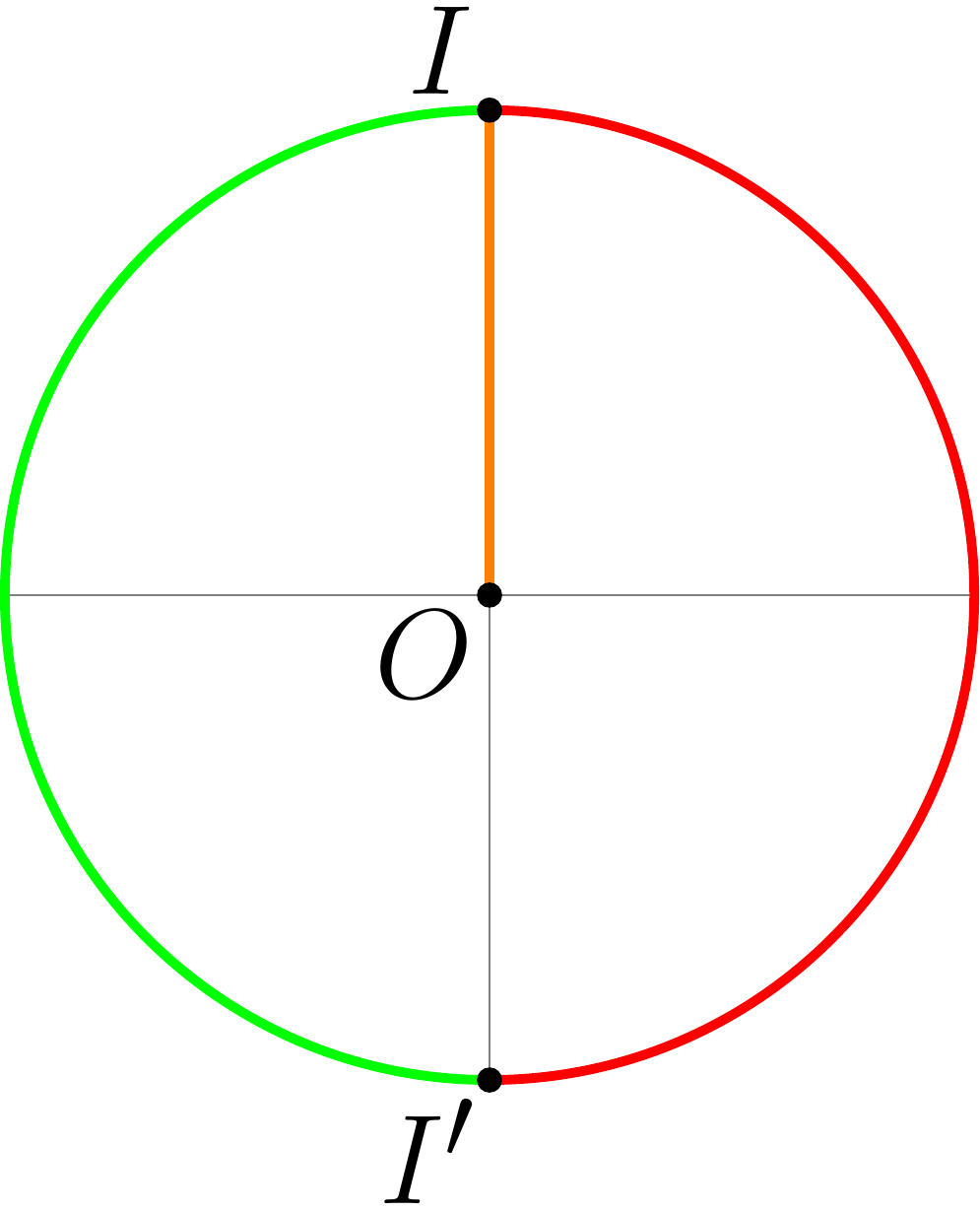}}
\end{minipage}
\hfill
\begin{minipage}[b]{0.235\textwidth}
\centering
\scalebox{0.28}{\includegraphics{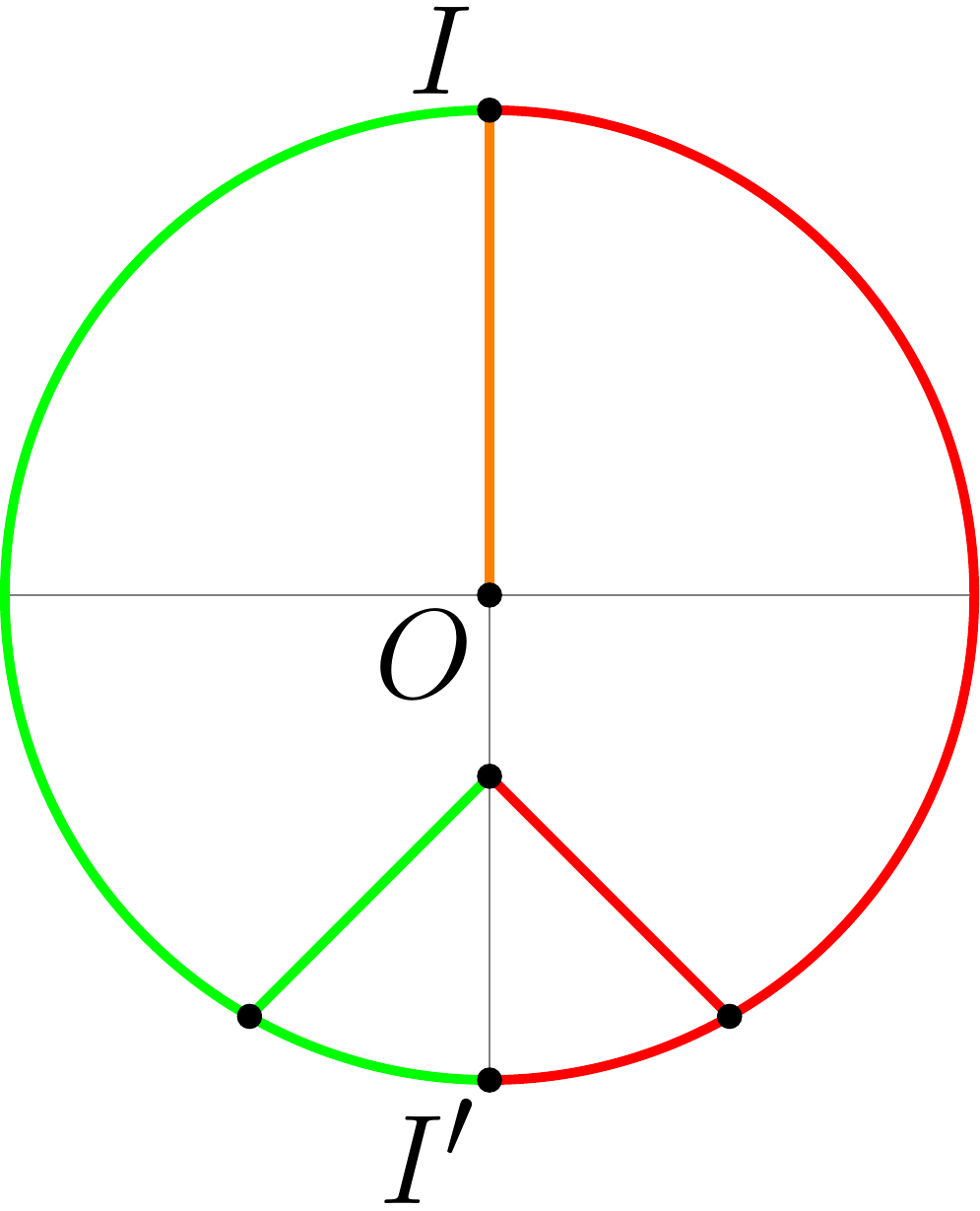}}
\end{minipage}
\hfill
\begin{minipage}[b]{0.235\textwidth}
\centering
\scalebox{0.28}{\includegraphics{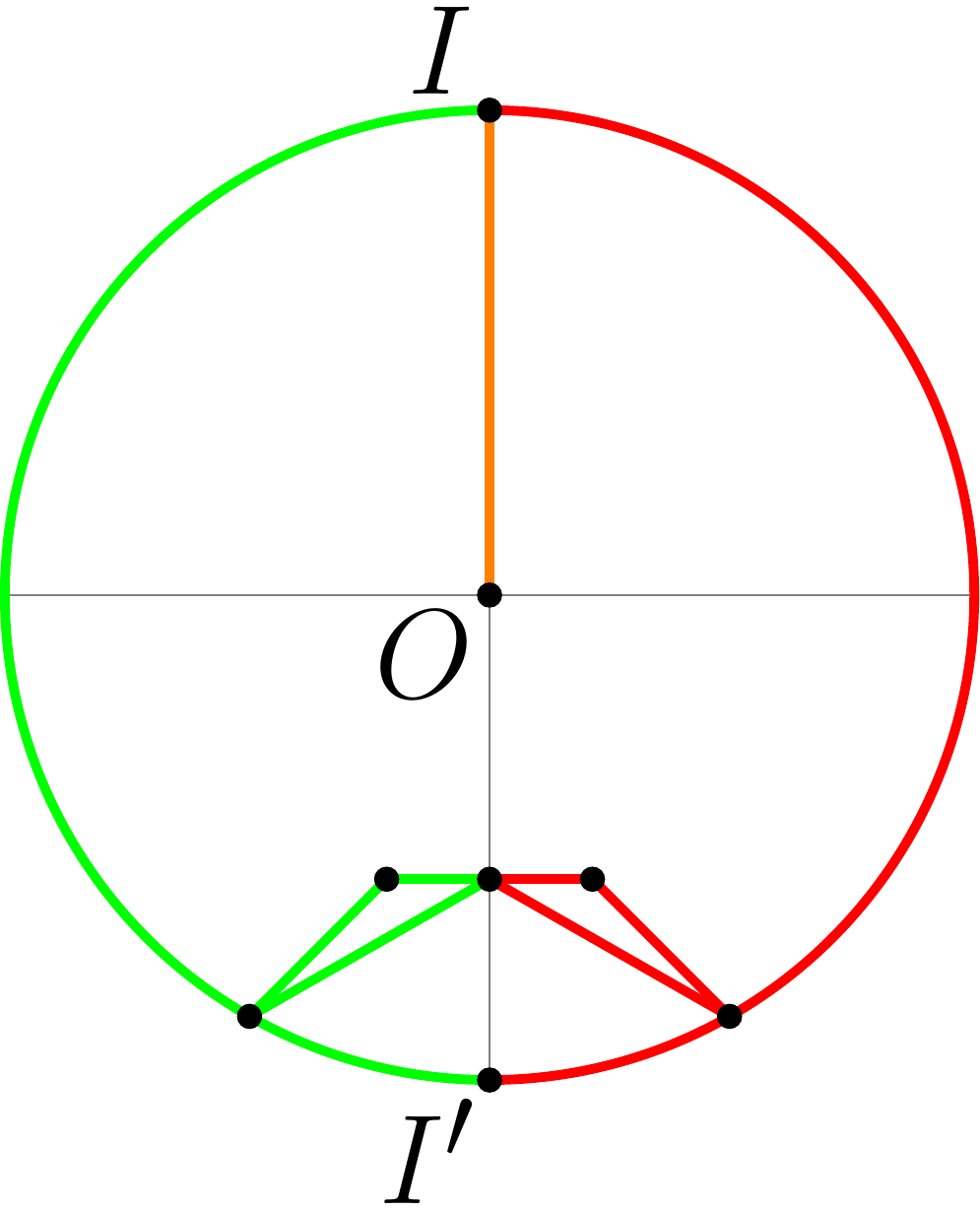}}
\end{minipage}
\hfill
\begin{minipage}[b]{0.235\textwidth}
\centering
\scalebox{0.28}{\includegraphics{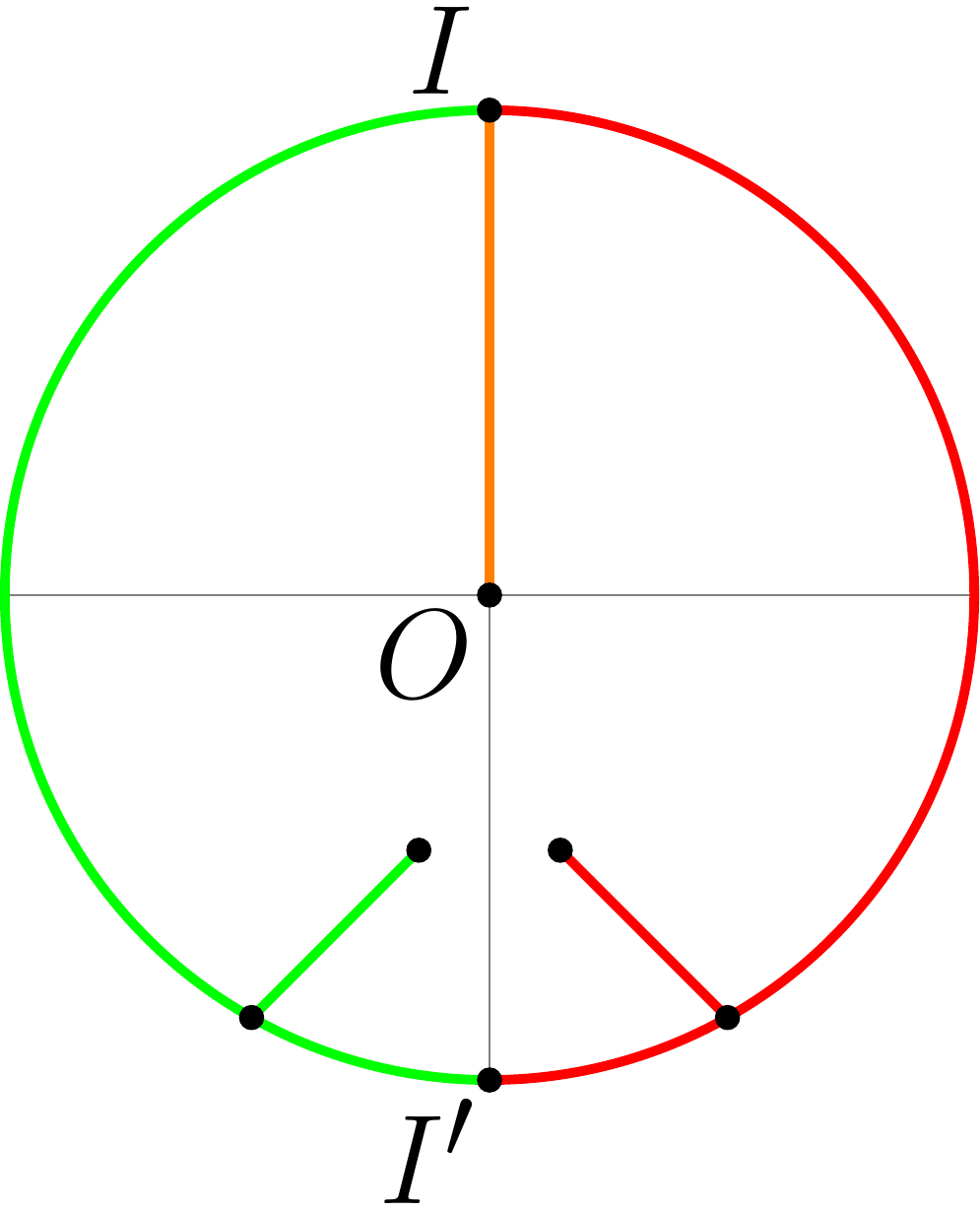}}
\end{minipage}
\caption{From left to right: The evolution of the algorithms presented in~\cite{A1},~\cite{A2A3} and~\cite{A4}. Green indicates the trajectories of robot $R_1$, red indicates the trajectories of robot $R_2$ and orange indicates that both robots move together.}
\label{fig:BildAlgs}
\end{figure}

\subsection{Our Results}

The idea behind introducing a detour through the interior of the disk is to protect the algorithm against the worst-possible exit position: Since the robot that finds the exit needs to intercept the other robot, it makes sense to move towards the other robot before reaching the worst-case interception point, which worsens the evacuation time for some exit positions, but improves it for the worst case. 
Of course, we can apply this idea iteratively to improve the new worst case by introducing a second detour etc.
Brandt et al.~\cite{A4} discuss this idea and state: ``However, the improvement in the evacuation time achieved by the collection of these very small cuts is negligibly small, even compared to the improvement given by our algorithm.''
We refute this statement by showing that introducing even only a single additional cut reduces the evacuation time by the same order of magnitude as the improvement by Brandt et al.~\cite{A4} relative to the result of Czyzowicz et al.~\cite{A2A3}. 
Specifically, we improve the evacuation time to~$5.6234$.
This indicates that there might still be room for improvement in the upper bound when considering a large family of additional cuts. It is worth noting that our algorithm does not use a forced meeting of the agents on either detour to the interior of the disk (see Figure~\ref{fig:Bild1A5}).

\subsection{Related Work}

Robot evacuation has been studied for various settings, differing in number of robots and/or exits, robot capabilities, objective, shape of the region, initial knowledge etc.
Most results were obtained for evacuation from the disk with wireless communication, i.e., for robots that can exchange information at all times.
Czyzowicz et al.~\cite{20} and Pattanayak et al.~\cite{6} consider evacuation with multiple exits and known positions of the exists relative to each other. 
Lamprou et al.~\cite{2} consider two robots of different speeds.
Regarding evacuation with more than two robots, Czyzowicz et al.~\cite{19} study the setting with three robots, one of which may be faulty.
Czyzowicz et al.~focus on evacuating a single robot, the ``queen'', that is supported by up to three~\cite{7} or more~\cite{9} ``servants''.
Regarding other environments, Czyzowicz et al.~\cite{5} study evacuation from equilateral triangles and squares, and Borowiecki et al.~\cite{BorowieckiDasDereniowskiKuszner/16} study the evacuation problem in graphs.

A problem closely related to evacuation is the search problem.
Especially the problem of finding a specific point on the line has received considerable attention, e.g., \cite{BaezayatesCulbersonRawlins/93,14,KaoReifTate/96}, and other works have focused on searching the plane, e.g., \cite{FeinermanKorman/16}.
Another related problem is the rendezvous or gathering problem, where robots initially located at different points need to find each other~\cite{17,ChalopinDasDisserEtal/13b,ChalopinDasDisserEtal/15,KranakisKrizancRajsbaum/06,16}.
Finally, the problem where one robot is trying to catch the other is called the lion and man problem and was first studied for the unit disk~\cite{Littlewood/53}.

\section{Preliminaries}
In this section we define the general notation for the following work. We use the following notation for line segments and arcs between two points~$A$, $B$.
\begin{description}
\item[$\overline{AB}$] denotes the straight line segment between $A$ and $B$.
\item[$\vert \overline{AB}\vert$] denotes the length of the segment~$\overline{AB}$.
\item[$\wideparen{AB}$] denotes the shorter arc from $A$ to $B$ along the boundary of the disk for~$A$ and $B$ on the boundary.
\item[$\vert \wideparen{AB} \vert$] denotes the length of the arc~$\wideparen{AB}$ for~$A$ and $B$ on the boundary.
\end{description}
A \textit{cut} is the movement of a robot from the boundary of the disk into the interior and back to the point where the robot left the boundary. In general a cut can have any shape, but our algorithm only uses line segments.
The \textit{depth} of a cut is defined as half the distance traveled when moving along a cut.
The \textit{evacuation time} is the time until both robots have reached the exit.

Our task is to define trajectories for the robots that minimize the evacuation time. To obtain the evacuation time for a given exit we need to know where the robots exchange the information about the location of the found exit. This is done by the \textit{meeting protocol}, a term coined in \cite{A2A3}. For an illustration, refer to Figure \ref{fig:Bild2A1}.

\begin{definition}[Meeting Protocol]
If at any time $t_0$ one of the robots finds the exit at point $E$, it computes the shortest additional time $t$ so that the other robot, after traveling distance $t_0+t$, is located at point $M$ satisfying $\vert\overline{EM}\vert=t$. This ensures that the robot that found the exit can move along the segment $\overline{EM}$ to pick the other robot up at point $M$ at time $t_0+t$. After both robots meet they evacuate along the segment $\overline{ME}$ via the exit at $E$, resulting in an evacuation time of $t_0+2t$.
\end{definition}

\begin{figure}[hbt]
\centering
\scalebox{0.45}{\includegraphics{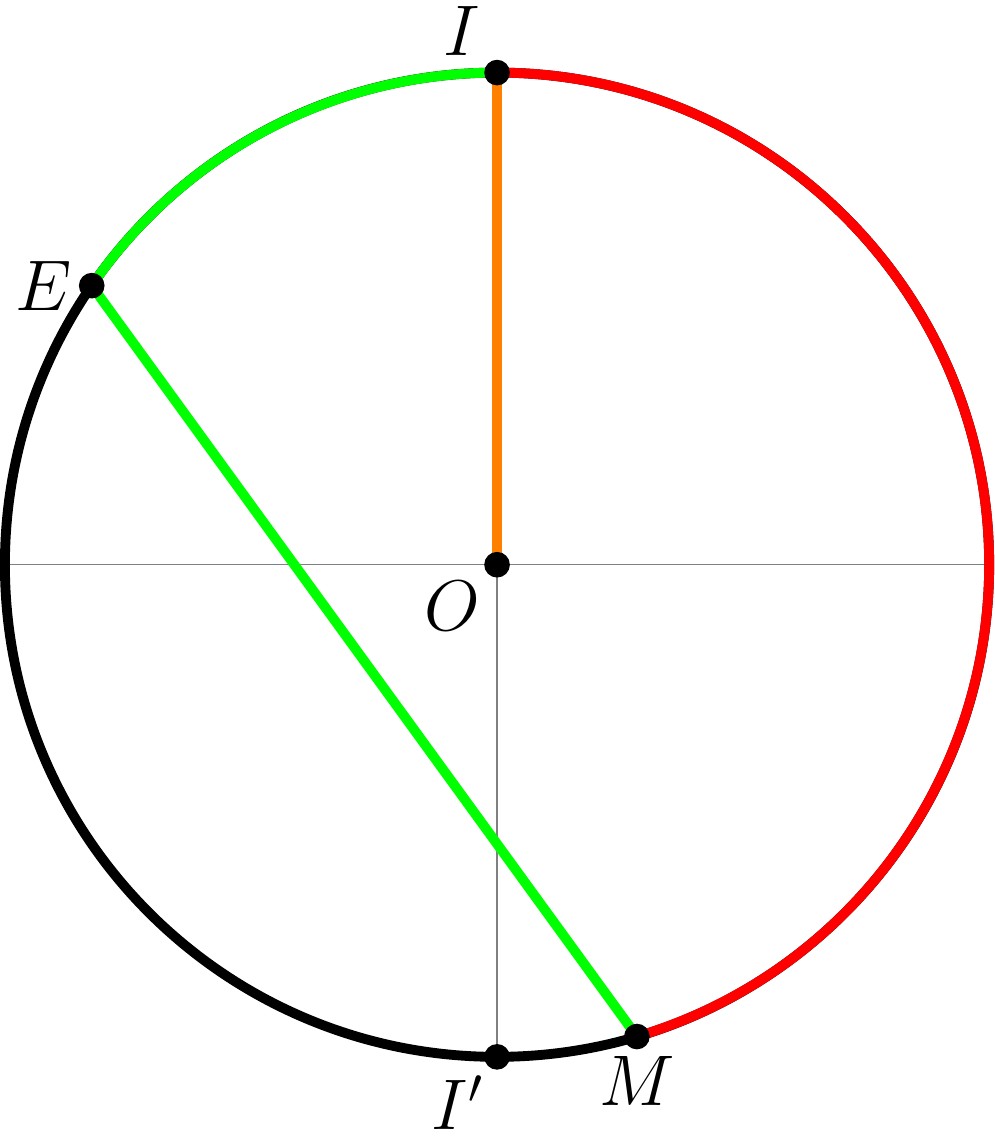}}
\caption{Illustration of the meeting protocol while the robots perform the algorithm presented in~\cite{A1}. Robot $R_1$ finds the exit at point $E$, uses the meeting protocol to calculate the meeting point and picks up robot $R_2$ at point $M$.}
\label{fig:Bild2A1}
\end{figure}

With the meeting protocol we are able to calculate the evacuation time for a given exit. Note that, because the robots move at unit speed, we can use time and traveled distance interchangeably. From the point of view of the robot that finds the exit, the evacuation time is the sum of the time it takes the robot to find the exit, the time it takes the robot to pick up the other robot at their meeting point, and the time it takes to get back to the exit. From the point of view of the robot that gets picked up, the evacuation time is the sum of the time it takes the robot to get to the meeting point and the time it takes the robot to travel from the meeting point to the exit. Note that equations of the form $x+2\sin\left((x+y)/2\right)=y$ have to be solved as part of the meeting protocol. In general there are no closed forms known for these equations and therefore we have to rely on numeric solutions.

To prove our stated evacuation time we will refer to a criterion established in \cite{A4}. We briefly recap the relevant definitions. To properly define certain relevant angles we first distinguish two cases regarding the direction of the movement of the two robots at the exit and the corresponding meeting point.

We say that the movement of the two robots at the exit and the corresponding meeting point is \textit{conform} if the two robots would move to the same side of the infinite line through the exit and the corresponding meeting point if they did not find the exit, respectively were not picked up at the corresponding meeting point. If the two robots would move to different sides of the infinite line we say that their movement (at the exit and the corresponding meeting point) is \textit{converse}. For an illustration, refer to Figure \ref{fig:BildMovement}. The authors of \cite{A4} note that the cases where one or both robots would move \emph{on} the infinite line can be arbitrarily considered to belong to one of the two cases.

For the cases of conform and converse movement we now (under the assumption of local differentiability of the movement) define two angles regarding the movement of the two robots at the exit $E$ and at the corresponding meeting point $M$, respectively, and the straight line segment~$s$ between~$E$ and~$M$.

We assume that, locally, robot $R_1$ arrives at the exit~$E$ via the local linearization of its trajectory~$g$ and would continue on $g$ if it did not find the exit at $E$. Analogously, we assume that, locally, robot $R_2$ arrives at the corresponding meeting point~$M$ via the local linearization of its trajectory~$h$ and would continue on $h$ if it was not picked up at $M$. 

In the case of conform movement, the angle between $g$ and $s$ is denoted by~$\beta$ and the angle between $s$ and $h$ by $\gamma$. In the case of converse movement the angle between $g$ and $s$ is denoted by~$\beta$ and the angle between $h$ and $s$ by $\gamma$. Note that only the definition of $\gamma$ differs in the two cases. For an illustration, refer again to Figure \ref{fig:BildMovement}.
\begin{figure}[hbt]
\begin{minipage}[b]{0.45\textwidth}
\centering
\scalebox{0.75}{\includegraphics{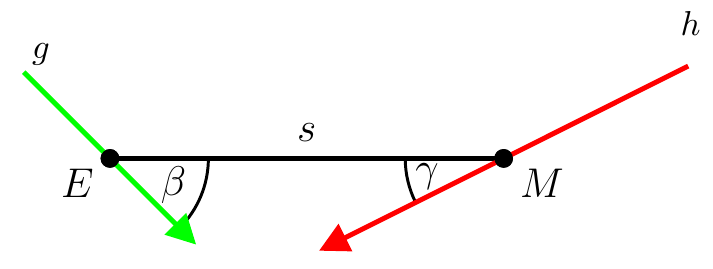}}
\end{minipage}
\hfill
\begin{minipage}[b]{0.45\textwidth}
\centering
\scalebox{0.75}{\includegraphics{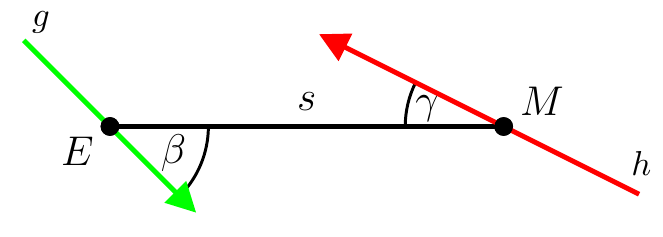}}
\end{minipage}
\caption{Illustration of conform movement (left) and converse movement (right) of the robots.}
\label{fig:BildMovement}
\end{figure}

\begin{theorem}[{\cite[Corollary 2.5, Theorems 2.8, 2.9 and 2.10]{A4}}]\label{notwendigeBed} If the trajectories of the two robots are differentiable around points $E$ and $M$ and $2\cos\left(\beta\right)+\cos\left(\gamma\right)\neq1$ holds, then there is an exit position that yields a larger evacuation time than placing the exit at $E$.
\end{theorem}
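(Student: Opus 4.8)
The plan is to read $2\cos\beta+\cos\gamma=1$ as a first-order stationarity condition for the evacuation time regarded as a function of the exit position, and then to prove the contrapositive of the theorem: a worst-case exit must in particular be a \emph{local} maximum of the evacuation time along any one-parameter family of exits through $E$, so it suffices to produce such a family along which the derivative of the evacuation time is nonzero precisely when $2\cos\beta+\cos\gamma\neq1$. Whenever that derivative is nonzero, the evacuation time is strictly monotone near $E$, and moving the exit infinitesimally in the direction of increase yields a valid exit with strictly larger evacuation time, which is exactly the conclusion.

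Concretely, I would parametrize exits by the time $\tau$ at which the finder $R_1$ reaches them. Since $R_1$ can detect the exit only while colocated with it on the boundary, the differentiability assumption at $E$ forces $R_1$ to traverse the boundary smoothly there, so the exit $E(\tau)$ moves at unit speed in the heading direction, i.e.\ $E'(\tau)=g$. Writing $P$ for the unit-speed trajectory of $R_2$, the meeting protocol places the meeting point at $M(\tau)=P(\tau+t(\tau))$ subject to $|\overline{E(\tau)M(\tau)}|=t(\tau)$, with unit tangent $h$ of $R_2$ at $M$, so that $M'(\tau)=h\,(1+t')$. The evacuation time is $T(\tau)=\tau+2t(\tau)$, and stationarity $T'(\tau)=0$ is equivalent to $t'(\tau)=-\tfrac12$.

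The core computation is to differentiate the meeting constraint. Setting $s=M(\tau)-E(\tau)$, so that $|s|=t$, differentiating $|s|^{2}=t^{2}$ gives $s\cdot\big(h(1+t')-g\big)=t\,t'$. Inserting $s\cdot g=t\cos\beta$ and $s\cdot h=\varepsilon\,t\cos\gamma$, where $\varepsilon=\pm1$ records the only difference between conform and converse movement, this simplifies (for $\cos\gamma\neq1$) to $t'=(\varepsilon\cos\gamma-\cos\beta)/(1-\varepsilon\cos\gamma)$. The conventions for $\gamma$ in the two cases are set up precisely so that substituting $t'=-\tfrac12$ and clearing denominators yields, in both cases, the single identity $2\cos\beta+\cos\gamma=1$. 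Hence $T'(\tau)\neq0$ exactly when $2\cos\beta+\cos\gamma\neq1$, completing the contrapositive.

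I expect the main obstacle to be the careful bookkeeping of orientations and signs: correctly identifying the directed angles $\beta$ and $\gamma$ relative to $s$, verifying that the conform and converse definitions of $\gamma$ flip the sign of $s\cdot h$ so that both cases collapse into the same condition, and confirming that the perturbed point remains a genuine exit found by $R_1$ (so the finder does not switch and the meeting protocol stays differentiable). The degenerate subcase $\cos\gamma=1$, where the linearized constraint is singular, must be treated separately; this case split is likely why the result is assembled from several statements in~\cite{A4}.
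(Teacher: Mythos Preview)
The paper does not actually prove Theorem~\ref{notwendigeBed}; it is quoted from~\cite{A4} (assembled there from Corollary~2.5 and Theorems~2.8--2.10) and used as a black box throughout the analysis. So there is no in-paper proof to compare your proposal against.

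That said, your first-variation argument is the standard route to this condition and is essentially how~\cite{A4} proceeds: parametrize candidate exits by the finder's arrival time~$\tau$, differentiate the meeting constraint $|M(\tau)-E(\tau)|=t(\tau)$, and read off that $T'(\tau)=1+2t'(\tau)$ vanishes precisely when $2\cos\beta+\cos\gamma=1$. Your plan is sound and your identification of the obstacles (sign conventions for $\beta,\gamma$ in the conform versus converse case, well-posedness of the perturbed meeting protocol, and the degenerate case where the denominator vanishes) matches exactly why the cited result is split across several statements. One small bookkeeping slip to watch for: with $s=M-E$, in the conform case one has $s\cdot h=-t\cos\gamma$ rather than $+t\cos\gamma$, since $\gamma$ is measured at $M$ between $h$ and the direction back toward $E$ (cf.\ Proposition~\ref{betagammagleichE}, where $\beta=\gamma\to\pi$ as the robots approach~$I$). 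With that sign the identity $2\cos\beta+\cos\gamma=1$ falls out directly, and the converse case is handled by the altered definition of~$\gamma$ rather than by an explicit~$\varepsilon$.
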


This means that, to consider an exit $E$ as the worst-case candidate, it is necessary that either the trajectory of at least one robot is not differentiable (around the exit or the corresponding meeting point) or $2\cos\left(\beta\right)+\cos\left(\gamma\right)=1$.

Like the algorithms illustrated in Figure~\ref{fig:BildAlgs}, our algorithm will follow the idea of initially moving the robots to an arbitrary point $I$ on the boundary (we denote the antipodal point by $I^{\prime}$) and then moving one robot counter-clockwise and the other robot clockwise along the boundary to find the exit. The search on the boundary will only be interrupted if they get picked up or perform a cut. We make a statement about algorithms that follow this general idea, which is helpful to calculate the angles $\beta$ and $\gamma$.
 
\begin{proposition}\label{betagammagleichE} If the robots start their search for the exit together at an arbitrary point $I$ on the boundary, one robot moves counter-clockwise and the other robot moves clockwise, their movement is conform, they move along the boundary towards the point $I^{\prime}$ and their search along the boundary is only interrupted if they get picked up or perform cuts, then the following statement holds: If the corresponding meeting point $M$ of an exit $E$ lies on the boundary, then $\beta=\gamma=\pi-\frac{x+y}{2}$, where $x\coloneqq\vert \wideparen{IE} \vert$ and $y\coloneqq\vert \wideparen{IM} \vert$.
\end{proposition}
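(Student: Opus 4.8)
The plan is to reduce everything to a single isosceles triangle together with the fact that the robots' trajectories are tangent to the boundary circle. First I would fix coordinates with the center $O$ of the disk at the origin and unit radius, so that arc length along the boundary equals the corresponding central angle. Placing $I$ at an arbitrary reference angle, the counter-clockwise robot reaches $E$ at angular position advanced by $x$ from $I$, while the clockwise robot sits at $M$ at angular position retreated by $y$ from $I$. Since $E$ and $M$ lie on opposite sides of $I$ along the boundary, the arc from $E$ through $I$ to $M$ has length $x+y$, and hence the central angle satisfies $\angle EOM = x+y$.

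Next I would exploit that, because the search is only interrupted by pickups or cuts and $M$ lies on the boundary, both robots move along the boundary at $E$ and at $M$ respectively; thus the local linearizations $g$ and $h$ are exactly the tangent lines to the circle at $E$ and at $M$, each perpendicular to the corresponding radius $OE$, $OM$. In the isosceles triangle $OEM$ (two sides being unit radii) the base angles equal $\angle OEM = \angle OME = \tfrac{\pi-(x+y)}{2}$. Subtracting each base angle from the right angle between the tangent and the radius gives the tangent--chord angle $\tfrac{\pi}{2}-\tfrac{\pi-(x+y)}{2}=\tfrac{x+y}{2}$ at both $E$ and $M$, which is the standard tangent--chord relation (half the intercepted arc).

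The crux is then to identify, among the two supplementary angles the chord makes with each tangent line, the specific one selected by the definitions of $\beta$ and $\gamma$. Here I would track orientations: the chord ray $E\to M$ points to the clockwise side of $E$, so it is the clockwise tangent ray that forms the acute angle $\tfrac{x+y}{2}$ with it; since $R_1$ actually travels counter-clockwise, the trajectory $g$ is the opposite ray and makes the supplementary angle $\beta=\pi-\tfrac{x+y}{2}$. Symmetrically, the chord ray $M\to E$ points to the counter-clockwise side of $M$, whereas $R_2$ travels clockwise, so $h$ is again the opposite ray and $\gamma=\pi-\tfrac{x+y}{2}$, using the conform-case definition of $\gamma$ as the angle between $s$ and $h$. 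As a direct check I would compute $\cos\beta=\cos\gamma=-\cos\tfrac{x+y}{2}=\cos\!\left(\pi-\tfrac{x+y}{2}\right)$ from the dot products of the tangent directions with the chord $E\to M$ of length $2\sin\tfrac{x+y}{2}$, and I would verify that both velocity directions lie on the same side of the line through $E$ and $M$ (the cross product of the chord with each tangent direction equals $\cos(x+y)-1$ in both cases), confirming that the configuration is indeed conform as assumed.

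I expect the main obstacle to be exactly this orientation bookkeeping: the bare tangent--chord angle is $\tfrac{x+y}{2}$, and the whole content of the proposition is that the directions of motion force the supplementary value $\pi-\tfrac{x+y}{2}$ for both $\beta$ and $\gamma$. A secondary caveat is the regime $x+y>\pi$, where the non-reflex central angle of the triangle is $2\pi-(x+y)$; I would either restrict attention to the relevant range $x+y\le\pi$ in which meetings occur before the robots pass $I^{\prime}$, or note that applying the tangent--chord identity to the arc of length $x+y$ yields the stated value regardless.
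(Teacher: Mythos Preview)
Your argument is essentially the same as the paper's: both use the isosceles triangle $OEM$ together with the perpendicularity of tangent and radius, and the paper likewise splits into the cases $x+y<\pi$ and $x+y\ge\pi$. One small correction: your first alternative for the second regime, ``restrict attention to $x+y\le\pi$ in which meetings occur before the robots pass $I^{\prime}$'', is not viable---in the later applications of this proposition (e.g.\ on the arc $\wideparen{E_2E_3}$) one has $x,y$ each close to $p_1^{*}\approx 2.63$, so $x+y>\pi$ very much occurs; your second alternative via the tangent--chord theorem (or the paper's explicit recomputation with central angle $2\pi-(x+y)$) is the one that actually works.
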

\begin{proof} We distinguish between the two cases: $x+y<\pi$ and $x+y\geq \pi$.

For the first case $x+y<\pi$ see the left side of Figure \ref{fig:BildBetaGamma}. Because $\vert\overline{OE}\vert=1$ and $\vert\overline{OM}\vert=1$, the triangle $\triangle EOM$ is isosceles. Therefore the base angles~$\eta$ and~$\eta^{\prime}$ are equal. With the statement, that the interior angles of a triangle add up to~$\pi$, we can express $\eta=\eta^{\prime}$ as $\smash{\frac{\pi-\left(x+y\right)}{2}}$. Our next observation is that the direction vector of the tangent at $E$ and the position vector of $E$ are perpendicular (the same holds for the tangent at $M$ and the position vector of $M$). This immediately yields $\beta=\frac{\pi}{2}+\eta$. Using our expression for $\eta$ we get $\beta=\pi-\frac{x+y}{2}$. The same follows for $\gamma$.

For the second case $x+y\geq\pi$ see the right side of Figure \ref{fig:BildBetaGamma}. We again have an isosceles triangle. Therefore the base angles $\eta$ and $\eta^{\prime}$ are equal. Because of $x+y\geq\pi$ we have to calculate the interior angle of the triangle $\triangle EOM$ at $O$. We easily obtain the interior angle by $2\pi-\left(x+y\right)$. Now analogous to the first case, with the equality of the base angles and the statement about the interior angles of a triangle, we can express $\eta$ as $\left(\pi-\left(2\pi-\left(x+y\right)\right)\right)/2$. In this case the angle $\beta$ can be obtained by~$\frac{\pi}{2}-\eta$. In conjunction with our expression for $\eta$ we have~$\beta=\frac{\pi}{2}-\eta=\frac{\pi}{2}-\left(\pi-\left(2\pi-\left(x+y\right)\right)\right)/2=\pi-\frac{x+y}{2}$. The same follows for~$\gamma$ with the equality of the base angles.
\end{proof}
\begin{figure}[hbt]
\begin{minipage}[b]{0.45\textwidth}
\centering
\scalebox{0.45}{\includegraphics{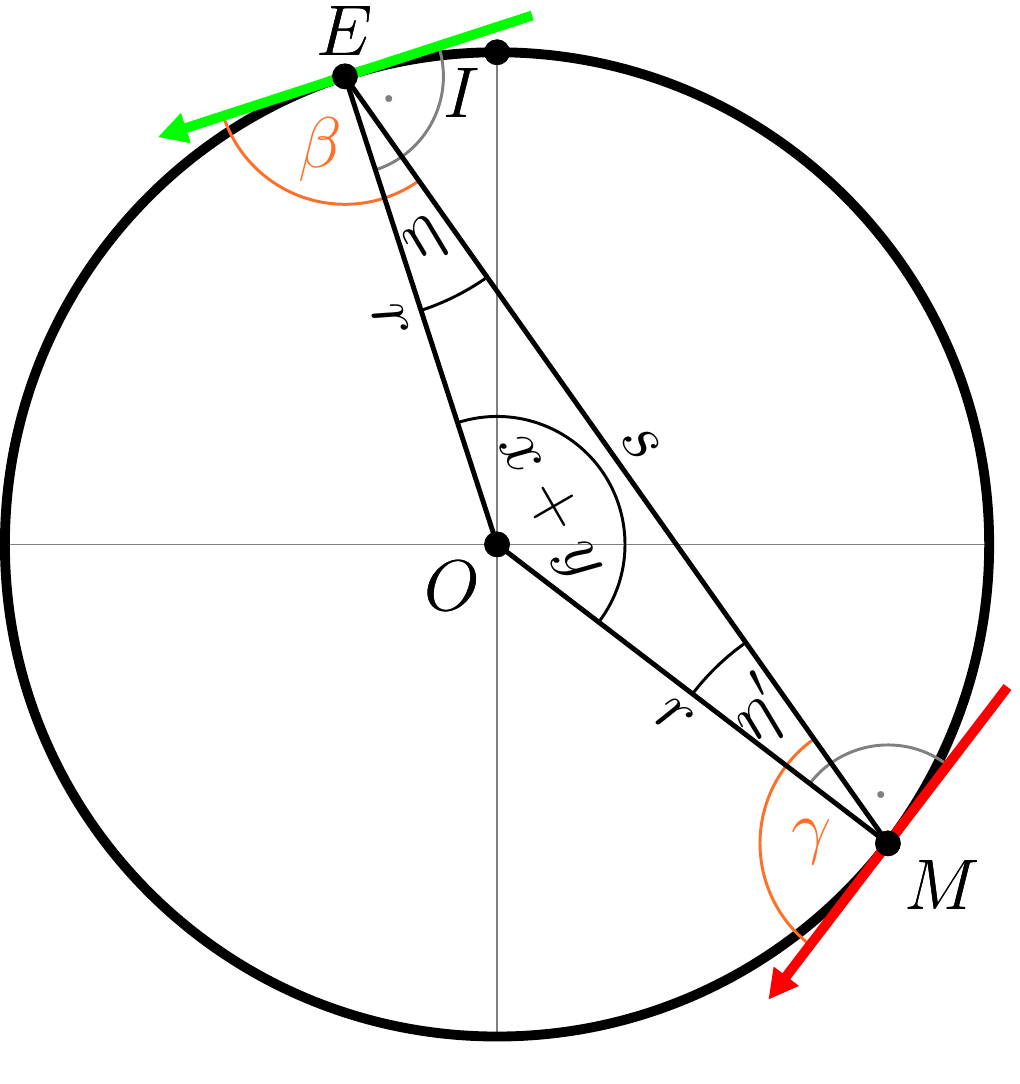}}
\end{minipage}
\hfill
\begin{minipage}[b]{0.45\textwidth}
\centering
\scalebox{0.45}{\includegraphics{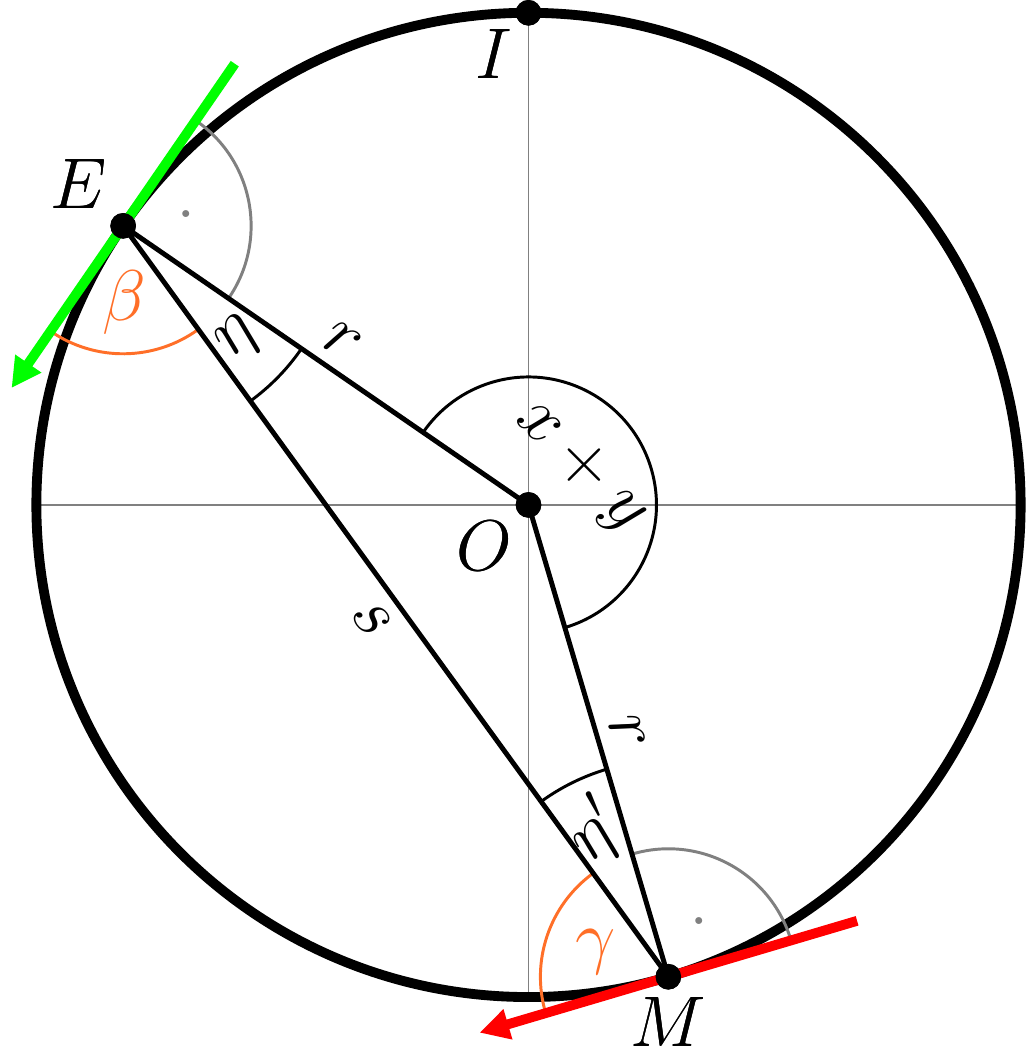}}
\end{minipage}
\caption{Illustration of the cases $x+y<\pi$ (left) and $x+y\geq\pi$ (right) in the proof of Proposition~\ref{betagammagleichE}.}
\label{fig:BildBetaGamma}
\end{figure}

We also make a straight-forward observation (about algorithms that follow the mentioned idea) that follows from monotonicity of the trajectories of both robots along the perimeter of the disk.

\begin{observation}\label{obs1} For any two exit positions $E$ and $E^{\sim}$ with the corresponding meeting points~$M$ and~$M^{\sim}$ lying on the boundary and with~$\vert\wideparen{IE}\vert > \vert\wideparen{IE^{\sim}}\vert$, we have~$\vert\wideparen{IM}\vert > \vert\wideparen{IM^{\sim}}\vert$.
\end{observation}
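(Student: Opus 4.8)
The plan is to argue by contradiction, using only the monotonicity of the two trajectories along the perimeter together with the minimality built into the meeting protocol. This keeps the proof free of any case distinction on the relative position of $E$ and $M$ (in particular whether $x+y$ is below or above $\pi$) and copes with the non-differentiable points introduced by cuts. By the symmetry of the algorithm I may assume that the same robot, say $R_1$, finds both exits (so $E,E^{\sim}$ lie on the counter-clockwise half searched by $R_1$) and that $R_2$ supplies both meeting points on its clockwise half; the other case is symmetric. I denote by $t_0$ and $t_0^{\sim}$ the times at which $R_1$ reaches $E$ and $E^{\sim}$, and by $\tau$ and $\tau^{\sim}$ the respective pick-up times at which $R_2$ is reached at $M$ and $M^{\sim}$. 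Since each robot never returns to a perimeter point it has already passed, its first-arrival time is a strictly increasing function of the arc length covered, and a cut contributes travel time without advancing this arc length. This gives the two facts I will use: from the assumption $|\wideparen{IM}|\le|\wideparen{IM^{\sim}}|$ to be refuted I get $\tau\le\tau^{\sim}$, and because $R_1$ covers the perimeter at unit speed while cuts only add time, $t_0-t_0^{\sim}\ge|\wideparen{IE}|-|\wideparen{IE^{\sim}}|$.

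Next I would exploit the meeting equation $|\overline{EM}|=\tau-t_0$ and the fact that the protocol for $E^{\sim}$ picks the \emph{earliest} admissible pick-up time. Assuming for contradiction that $|\wideparen{IM}|\le|\wideparen{IM^{\sim}}|$, I combine the arc-length bound with the elementary fact that arc length strictly exceeds chord length to obtain $|\overline{E^{\sim}E}|<|\wideparen{IE}|-|\wideparen{IE^{\sim}}|\le t_0-t_0^{\sim}$. Feeding this into the triangle inequality $|\overline{E^{\sim}M}|\le|\overline{E^{\sim}E}|+|\overline{EM}|$ and using $|\overline{EM}|=\tau-t_0$ yields the strict bound $|\overline{E^{\sim}M}|<(t_0-t_0^{\sim})+(\tau-t_0)=\tau-t_0^{\sim}$. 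In words: at time $\tau$ the robot $R_2$ sits at $M$, and the finder of $E^{\sim}$, which has been at $E^{\sim}$ since time $t_0^{\sim}$, has strictly more than enough time to have reached $M$.

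Finally I would turn this into a contradiction with minimality. Along the continuous trajectory $P_2$ of $R_2$ consider $g(s)=|\overline{E^{\sim}\,P_2(s)}|-(s-t_0^{\sim})$; it is non-negative at $s=t_0^{\sim}$ and strictly negative at $s=\tau$ by the bound just derived, so the intermediate value theorem produces a zero $s^{*}\in[t_0^{\sim},\tau)$, i.e.\ the finder of $E^{\sim}$ could already meet $R_2$ at time $s^{*}<\tau$. As the meeting protocol selects the earliest such time, $\tau^{\sim}\le s^{*}<\tau$, contradicting $\tau\le\tau^{\sim}$; hence $|\wideparen{IM}|>|\wideparen{IM^{\sim}}|$. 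I expect the main obstacle to be the bookkeeping around cuts: one must justify that cuts contribute time without net perimeter progress so that $t_0-t_0^{\sim}\ge|\wideparen{IE}|-|\wideparen{IE^{\sim}}|$ continues to hold, because it is exactly this inequality (together with arc $>$ chord) that makes the bound on $|\overline{E^{\sim}M}|$ strict and lets the minimality of the meeting protocol close the argument uniformly, without ever invoking monotonicity of the sine and hence in all geometric regimes.
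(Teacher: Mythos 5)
Your proposal is correct, and it is genuinely more than what the paper offers: the paper states this observation \emph{without proof}, as something that ``follows from monotonicity of the trajectories of both robots along the perimeter of the disk.'' Monotonicity alone, however, only converts an ordering of the pick-up \emph{times} into an ordering of the meeting \emph{points}; the substantive content --- that finding the exit later forces a strictly later pick-up --- is exactly what your argument supplies and what the paper leaves implicit. Your chain $\vert\overline{E^{\sim}E}\vert < \vert\wideparen{E^{\sim}E}\vert \le t_0 - t_0^{\sim}$ (arc exceeds chord, cuts only add time), combined with the protocol equation $\vert\overline{EM}\vert = \tau - t_0$, the triangle inequality, and the intermediate value theorem applied to $g(s)=\vert\overline{E^{\sim}P_2(s)}\vert-(s-t_0^{\sim})$, shows that the finder of $E^{\sim}$ could intercept $R_2$ strictly before time $\tau$; minimality of the meeting protocol then gives $\tau^{\sim} < \tau$, closing the contradiction. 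One caveat, which you inherit from the statement rather than introduce: at the cut endpoints the observation itself degenerates. For the paper's own parameters, the exits $E_1$ and $E_2$ both have meeting point $C_1^{\prime}$ (reached before and after the first cut, respectively), so $\vert\wideparen{IE_2}\vert > \vert\wideparen{IE_1}\vert$ while $\vert\wideparen{IM_2}\vert = \vert\wideparen{IM_1}\vert$, and strict inequality fails. Correspondingly, your first step (``$\vert\wideparen{IM}\vert \le \vert\wideparen{IM^{\sim}}\vert$ implies $\tau \le \tau^{\sim}$'') is the one step that breaks, and it breaks precisely when $M=M^{\sim}$ is such a cut base point, since $R_2$ visits that boundary point at two distinct times. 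So your proof is valid exactly on the domain where the observation is true --- meeting points not both coinciding with the same cut endpoint --- which is also the only regime in which the paper ever invokes it; flagging this restriction explicitly would make your write-up strictly stronger than the paper's treatment.
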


\section{Our Algorithm}
In this section we present a class $\Alg(p_1, \alpha_1, d_1, p_2, \alpha_2, d_2)$ of parameterized algorithms with two cuts. The position of the first cut is specified by the parameter~$p_1$, the parameter~$\alpha_1$ specifies the angle of the first cut and~$d_1$ describes the depth of the first cut. Correspondingly,~$p_2$ refers to the position,~$\alpha_2$ refers to the angle and $d_2$~refers to the depth of the second cut. The trajectories of the robots are described below. For an illustration, refer to Figure \ref{fig:Bild1A5}.
\algrule
\textbf{Algorithm} $\Alg(p_1, \alpha_1, d_1, p_2, \alpha_2, d_2)$:
\algrule
\noindent{\emph{If a robot finds the exit at any point, it immediately performs the meeting protocol and picks the other robot up at the calculated meeting point $M$. Both robots reach $M$ at the same time and evacuate via the now known exit on a straight line together. Until this happens the trajectories of the robots are as follows:}}
\begin{algorithmic}[1]
\STATE Both robots move on a straight line to an arbitrary point $I$ on the boundary. 
\STATE At $I$ the robots start their search on the boundary in opposite directions: robot $R_1$ moves counter-clockwise and robot $R_2$ moves clockwise.
\STATE After the robots each covered a distance of $p_1$ on the boundary, robot $R_1$ is at $C_1$ and robot $R_2$ is at~$C_1^{\prime}$. There both robots perform their first cut. They move on a straight line at angle $\alpha_1$ towards the interior (see Figure \ref{fig:Bild1A5}). 
\STATE After they each covered a distance of $d_1$ on the straight line, robot $R_1$ is at~$P_1$ and robot $R_2$ is at $P_1^{\prime}$.
\STATE Now both robots return on the straight line to $C_1$ and $C_1^{\prime}$, respectively.
\STATE After reaching $C_1$ and $C_1^{\prime}$, respectively, they continue their search on the boundary, proceeding as in step $2$. 
\STATE After the robots each covered a distance of $p_2$ on the boundary in total, robot $R_1$ is at $C_2$ and robot $R_2$ is at~$C_2^{\prime}$. There both robots perform their second cut. They move on a straight line at angle $\alpha_2$ towards the interior (see Figure \ref{fig:Bild1A5}).
\STATE After they each covered a distance of $d_2$ on the straight line, robot $R_1$ is at~$P_2$ and robot $R_2$ is at $P_2^{\prime}$.
\STATE Now both robots return on the straight line to $C_2$ and $C_2^{\prime}$ respectively.
\STATE After reaching $C_2$ and $C_2^{\prime}$ respectively, they continue their search on the boundary, proceeding as in step $2$.
\end{algorithmic}
\algrule
\begin{figure}[hbt]
\begin{minipage}[b]{0.45\textwidth}
\centering
\scalebox{0.45}{\includegraphics{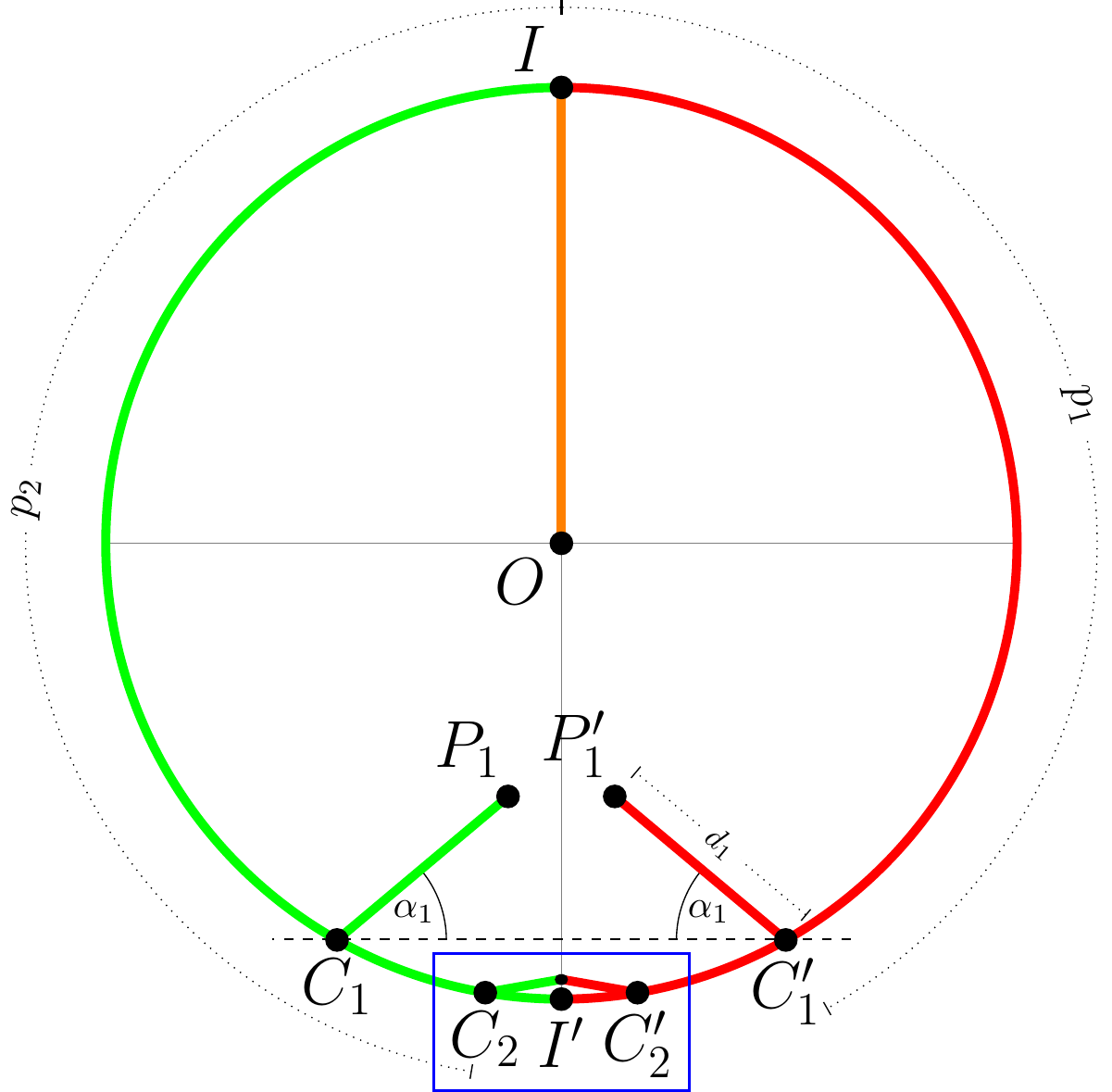}}
\end{minipage}
\hfill
\begin{minipage}[b]{0.45\textwidth}
\centering
\includegraphics[width=\textwidth]{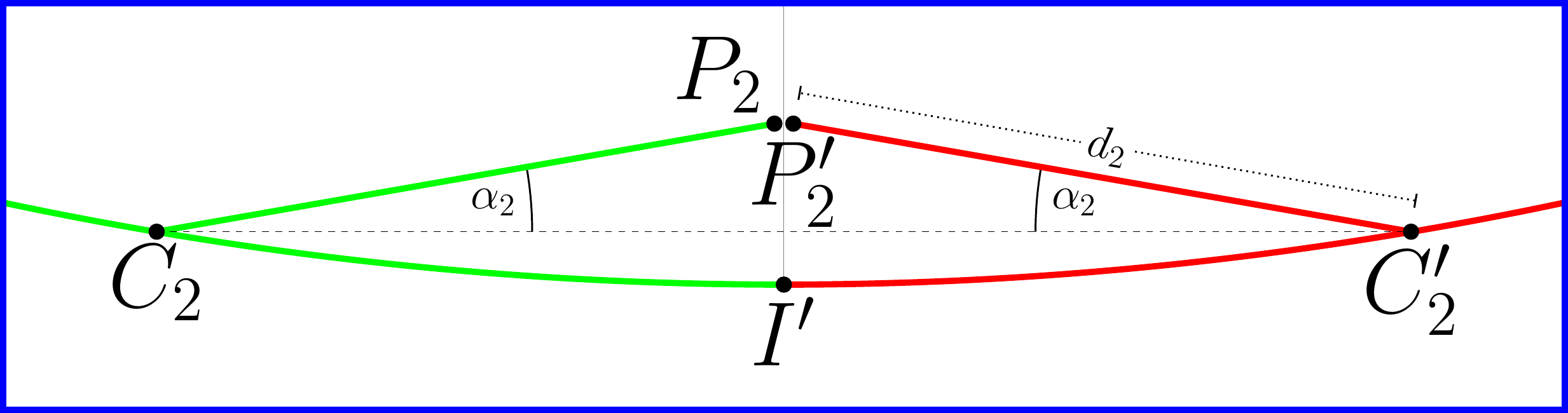}
\end{minipage}
\caption{Illustration of Algorithm $\Alg$. Left: trajectories of the two robots for the exit located at $I^{\prime}$. Right: magnification of the region surrounding~$I^{\prime}$.}
\label{fig:Bild1A5}
\end{figure}

\section{Analysis}
To achieve the stated bound on the evacuation time of $5.6234$ we used local search to computationally determine the parameters $p_1=2.62666582851$, $\alpha_1=2\pi/9$, $d_1=0.490011696287$, $p_2=2.97374843355$, $\alpha_2=0.05523991\pi$ and $d_2=0.1670474016$ for our algorithm. In the following we rigorously prove that our algorithm indeed needs time at most $5.6234$ to evacuate two robots from the unit disk, irrespective of the location of the exit. In order to improve readability, we refer to the above values of our parameters as $p_1^{*}$, $\alpha_1^{*}$, $d_1^{*}$, $p_2^{*}$, $\alpha_2^{*}$ and $d_2^{*}$.

Because the presented algorithm is symmetric, it is sufficient to analyze only one half of the disk. We assume that robot $R_1$ finds the exit and robot $R_2$ gets picked up. For the analysis of our algorithm we proceed along the same arguments as the authors in \cite{A4}, but apply these to our approach with two cuts. First we partition the arc $\wideparen{II^{\prime}}$, into the following parts:
\begin{description}
\item [$\wideparen{IE_1}$] For all exits on this arc robot $R_2$ is picked up before it leaves the boundary at point $C_1^{\prime}$. In particular, if the exit is located at $E_1$, the meeting of the robots will take place at $C_1^{\prime}$.
\item [$\wideparen{E_1E_2}$] For all exits on this arc, robot $R_2$ is picked up while performing its first cut. In particular, if the exit is located at $E_2$, robot $R_2$ will be picked up at~$C_1^{\prime}$ after completing its first cut.
\item [$\wideparen{E_2E_3}$] If the exit lies on this arc, robot $R_1$ finds the exit before performing its first cut, but robot~$R_2$ is picked up after performing its first cut. 
\item [$\wideparen{E_3E_4}$] This part contains all exits for those robot $R_2$ is picked up while performing its second cut. In particular, if the exit is located at $E_4$, robot $R_2$ will be picked up at $C_2^{\prime}$ after completing its second cut. We exclude the point $E_3$ from this (half-open) part.
\item [$\wideparen{E_4E_5}$] If the exit lies on this arc, robot $R_1$ finds the exit before performing its second cut, but robot~$R_2$ is picked up after performing its second cut.
\item [$\wideparen{E_5I^{\prime}}$] This part contains all exits that robot $R_1$ finds after performing its second cut. We exclude the point $E_5$ from this (half-open) part.
\end{description}
We note that all points are well-defined for our parameter values. In particular, if the exit is found after the first cut, the meeting of the robots cannot take place before the other robot performs its second cut. We obtain (for our parameter values): $\vert\wideparen{IE_1}\vert\approx0.629973871925$, $\vert\wideparen{IE_2}\vert\approx2.590020657077$, $\vert\wideparen{IE_3}\vert=2.62666582851$, $\vert\wideparen{IE_4}\vert\approx2.972352082515$ and $\vert\wideparen{IE_5}\vert=2.97374843355$.\\ 
We now analyze each of these parts in detail and determine possible worst-case candidates.
\begin{lemma}\label{A5LemmaIntervall1E} If there is a (global) worst-case exit position on the arc~$\wideparen{IE_1}$, then it is at $E_1$.
\end{lemma}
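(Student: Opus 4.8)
The plan is to parametrise an exit on the arc $\wideparen{IE_1}$ by $x \coloneqq |\wideparen{IE}| \in [0, |\wideparen{IE_1}|]$ and to show that the associated evacuation time is strictly increasing in $x$, so that its maximum over this arc is attained at the right endpoint $E_1$. For every exit on this arc both robots have so far only moved along the boundary (neither has reached its first cut, since $|\wideparen{IE_1}| < p_1^{*}$), and $R_2$ is picked up on the boundary at a point $M$ with $y \coloneqq |\wideparen{IM}| \le p_1^{*}$. By the meeting protocol, $R_1$ finds the exit at time $t_0 = 1 + x$ and then needs additional time $t = y - x = 2\sin\left(\tfrac{x+y}{2}\right)$ to reach $M$; hence $y$ solves $y - x = 2\sin\left(\tfrac{x+y}{2}\right)$ and the evacuation time is $T(x) = t_0 + 2t = 1 + 2y - x$.

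First I would rule out every exit in the interior of the arc using Theorem~\ref{notwendigeBed}. On the open arc the trajectories of both robots are differentiable around $E$ and $M$, since both points lie on the smooth boundary strictly before the first cut, and their movement is conform; therefore Proposition~\ref{betagammagleichE} gives $\beta = \gamma = \pi - \tfrac{x+y}{2}$. Substituting this into the necessary condition $2\cos\beta + \cos\gamma = 1$ yields $3\cos\left(\pi - \tfrac{x+y}{2}\right) = 1$, i.e.\ $\cos\left(\tfrac{x+y}{2}\right) = -\tfrac{1}{3}$, equivalently $x + y = 2\arccos\left(-\tfrac{1}{3}\right) \approx 3.8213$. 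The crucial estimate is that this value is never reached on $\wideparen{IE_1}$: the sum $x + y$ is largest at $E_1$, where $y = p_1^{*}$ and $x = |\wideparen{IE_1}|$ give $x + y \approx 3.257 < 3.8213$. Hence $2\cos\beta + \cos\gamma \neq 1$ everywhere in the interior, and Theorem~\ref{notwendigeBed} certifies that no interior exit can be a global worst case.

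It then only remains to compare the two endpoints. At $I$ we have $x = y = 0$ and $T = 1$, which cannot be a global worst case because already $T(E_1) > 1$; this excludes $I$. To make the choice of $E_1$ explicit I would differentiate the meeting-protocol relation implicitly: writing $c \coloneqq \cos\left(\tfrac{x+y}{2}\right)$ one obtains $y' = \tfrac{1+c}{1-c}$ and thus $T'(x) = 2y' - 1 = \tfrac{1 + 3c}{1 - c}$. Since $c > -\tfrac{1}{3}$ throughout the arc (exactly the estimate above), we get $T'(x) > 0$, so $T$ is strictly increasing and its maximum on the closed arc is attained at $E_1$.

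The main obstacle is the quantitative estimate $x + y < 2\arccos\left(-\tfrac{1}{3}\right)$ on the entire arc. This relies on $y = p_1^{*}$ at $E_1$ and on the monotonicity of $y$ as a function of $x$ (which itself comes from the implicit meeting-protocol equation), and it also requires checking that on the open arc the meeting point $M$ stays strictly before $C_1'$, so that the differentiability assumption of Theorem~\ref{notwendigeBed} is satisfied at every interior exit.
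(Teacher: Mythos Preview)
Your proof is correct and follows the paper's approach closely: both arguments use Proposition~\ref{betagammagleichE} together with Observation~\ref{obs1} to bound $\beta=\gamma$ on the interior of the arc and then invoke Theorem~\ref{notwendigeBed} (the paper phrases the key estimate as $3\cos(1.51327)<1$, while you phrase it equivalently as $x+y<2\arccos(-1/3)$), and both dismiss $I$ by its trivial evacuation time~$1$. Your additional implicit-differentiation step showing $T'(x)=(1+3c)/(1-c)>0$ is not in the paper; it is a more elementary, self-contained alternative that in fact renders the appeal to Theorem~\ref{notwendigeBed} unnecessary on this particular arc, at the cost of working only where the meeting equation $y-x=2\sin\bigl((x+y)/2\bigr)$ has this simple boundary--boundary form.
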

\begin{proof} To prove the above statement, we will use Theorem \ref{notwendigeBed} to exclude all interior points of the arc~$\wideparen{IE_1}$ as worst-case candidates.

Note that the movement of both robots is not differentiable at $I$ and the movement of robot $R_2$ is not differentiable at $C_1^{\prime}$. Therefore we cannot use Theorem \ref{notwendigeBed} to exclude $I$ and $E_1$ as worst-case candidates, as the corresponding meeting point for an exit at $E_1$ is at $C_1^{\prime}$.

Recall that, for all possible exits on the arc $\wideparen{IE_1}$, robot $R_2$ is picked up before it performs its first cut. In particular, for all these possible exits their corresponding meeting points lie on the boundary. Therefore Proposition \ref{betagammagleichE} is applicable. Also note that we are in the case of conform movement.

With Proposition \ref{betagammagleichE} we obtain $\beta=\gamma\approx 1.51327$ for the exit at $E_1$ and the corresponding meeting point at $C_1^{\prime}$, assuming that robot $R_2$ would continue its search on the boundary at point $C_1^{\prime}$. This yields $2\cos\left(1.51327\right)+\cos\left(1.51327\right){< 1}$. Note that by Proposition \ref{betagammagleichE} and Observation \ref{obs1} the obtained value of approximately $1.51327$ is a lower bound on $\beta=\gamma$ for any exit on the interior of the arc $\wideparen{IE_1}$ and its corresponding meeting point. In conjunction with the monotonicity of the cosine function on the interval $\left[0, \pi\right]$ the statement $2\cos\left(\beta\right)+\cos\left(\gamma\right)< 1$ holds for any exit on the interior of the arc $\wideparen{IE_1}$ and its corresponding meeting point. Therefore we can use Theorem \ref{notwendigeBed} to exclude any exit on the interior of the arc $\wideparen{IE_1}$ as possible worst-case candidate. 

It remains to show that the exit at $I$ is not a worst-case candidate. This is obvious since the evacuation time for this placement would only be $1$ because the robots would find the exit together after initially moving to point $I$.
\end{proof}

Before we start with the next arc, we recall a statement by the authors of \cite{A4}, that also applies to the second cut. We decouple the exit and meeting point for a moment and let~$X$ be an exit for which robot $R_2$ would be picked up while it is returning from the tip of a cut to the boundary. Now instead of the corresponding meeting point, we consider $Y$ as the meeting point for the exit at~$X$, where $Y$ is on the corresponding cut. Let $\beta^{\prime}$ be the angle between the direction of movement of robot~$R_1$ at $X$ and $\overline{XY}$ and let $\gamma^{\prime}$ be the angle between $\overline{XY}$ and the direction of movement of robot~$R_2$ while returning to the boundary along the corresponding cut. They state:
\begin{lemma}\label{claim1} If $2\sin\left(\beta^{\prime}\right)-\sin\left(\gamma^{\prime}\right)>0$, then the value of $2\cos\left(\beta^{\prime}\right)+\cos\left(\gamma^{\prime}\right)$ increases by moving $Y$ by a small $\epsilon$ along the corresponding cut towards the boundary (but not on to the boundary).
\end{lemma}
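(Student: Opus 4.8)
The plan is to treat $f(\epsilon) \coloneqq 2\cos(\beta'(\epsilon)) + \cos(\gamma'(\epsilon))$ as a function of the displacement $\epsilon$ of the meeting point $Y$ along the cut toward the boundary, and to show that its derivative at $\epsilon = 0$ is strictly positive under the stated hypothesis. The key observation is that $\gamma'$ is held \emph{fixed} by the geometry: robot $R_2$ returns to the boundary along a straight cut segment, so the direction of its movement while returning is constant, and the segment $\overline{XY}$ stays collinear as $Y$ slides along this very cut. Thus only $\beta'$ varies with $\epsilon$, and $\cos(\gamma')$ contributes nothing to $f'(0)$. This reduces the whole statement to understanding how $\beta'$, the angle at the exit $X$ between $R_1$'s direction of movement and the segment $\overline{XY}$, responds to moving $Y$.

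First I would set up local coordinates so that $X$, the fixed direction of $R_1$'s linearized trajectory $g$ at $X$, and the line carrying the cut are all expressed explicitly; then $Y(\epsilon)$ is a point moving at unit rate along the cut line toward the boundary. The segment $\overline{XY(\epsilon)}$ has some length $\ell(\epsilon)$ and the angle $\beta'(\epsilon) = \angle(g, \overline{XY(\epsilon)})$. The natural tool is the law of sines (or a direct differentiation of $\cos\beta' = \langle \hat g, \widehat{XY}\rangle$): moving $Y$ by $\epsilon$ along the cut changes the direction $\widehat{XY}$, and hence $\beta'$, at a rate governed by the sine of the angle at $Y$ in the triangle $\triangle XY(\text{boundary foot})$ divided by $\ell$. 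I expect to obtain an expression of the form $\tfrac{d\beta'}{d\epsilon} = \tfrac{\sin(\gamma')}{\ell}$ up to sign, where $\gamma'$ enters precisely because it is the angle between $\overline{XY}$ and the cut direction.

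Combining, I would compute
\begin{equation}
f'(0) = -2\sin(\beta')\,\frac{d\beta'}{d\epsilon} - \sin(\gamma')\,\frac{d\gamma'}{d\epsilon} = -2\sin(\beta')\cdot\Bigl(-\frac{\sin(\gamma')}{\ell}\Bigr) = \frac{\sin(\gamma')\bigl(2\sin(\beta')-\sin(\gamma')\bigr)}{\ell} \cdot \frac{1}{\sin(\gamma')},
\end{equation}
and after the signs are tracked correctly I anticipate the rate collapses to a positive multiple of exactly $2\sin(\beta') - \sin(\gamma')$, which is positive by hypothesis. The cleanest route is probably to observe that the net effect of sliding $Y$ combines the change in $\overline{XY}$'s direction (affecting $\beta'$) with the unit-speed constraint that links $R_2$'s remaining travel, so that the correctly signed derivative of $f$ is proportional to $2\sin(\beta') - \sin(\gamma')$ with a strictly positive constant of proportionality (the reciprocal of the relevant segment length, which is finite and nonzero for $Y$ strictly inside the cut).

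The main obstacle will be getting the signs and the geometric bookkeeping exactly right: I need to pin down the orientation conventions for $\beta'$ and $\gamma'$ (consistent with the paper's conform/converse definitions), verify that $\gamma'$ is genuinely constant as claimed, and confirm that moving $Y$ \emph{toward} the boundary increases $\beta'$ rather than decreases it. A subtle point is that $X$ is decoupled from its true meeting point here, so I must be careful that the meeting-protocol timing constraint (equal arrival times) is \emph{not} imposed at $Y$ — $Y$ is a hypothetical meeting point, not the one dictated by the protocol — which is exactly why $\gamma'$ stays fixed and the computation is a pure planar-geometry derivative rather than an implicit-function argument. Once the sign of $\tfrac{d\beta'}{d\epsilon}$ is correctly established, the conclusion follows immediately from the hypothesis $2\sin(\beta') - \sin(\gamma') > 0$ and the strict positivity of the geometric prefactor.
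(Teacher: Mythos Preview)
Your central geometric claim is wrong, and it is precisely the point where the argument lives. You assert that $\gamma'$ stays constant because ``the segment $\overline{XY}$ stays collinear as $Y$ slides along this very cut.'' But $X$ is a fixed point on the boundary arc, not on the cut line; as $Y$ slides along the cut, the chord $\overline{XY}$ genuinely rotates. The direction of $R_2$'s motion (along the cut) is fixed, but the direction of $\overline{XY}$ is not, so $\gamma'$ --- the angle between these two directions --- changes. Your own computation should have flagged this: with $d\gamma'/d\epsilon=0$ and $d\beta'/d\epsilon=-\sin(\gamma')/\ell$ you get $f'(0)=2\sin(\beta')\sin(\gamma')/\ell$, which is always positive and never involves the hypothesis $2\sin(\beta')-\sin(\gamma')>0$. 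The displayed equation you wrote is not an identity; the middle expression and the right-hand side are unequal.

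The paper's actual observation is that the rotation of $\overline{XY}$ is a single angular increment that simultaneously changes $\beta'$ and $\gamma'$: since the reference direction $g$ at $X$ and the cut direction at $Y$ are both fixed, the same rotation $\delta$ of the chord decreases $\beta'$ by $\delta$ and increases $\gamma'$ by $\delta$ (for the relevant orientation when $Y$ moves toward the boundary). Then
\[
\Delta f \approx -2\sin(\beta')\,(-\delta)\;-\;\sin(\gamma')\,(+\delta)\;=\;\bigl(2\sin(\beta')-\sin(\gamma')\bigr)\,\delta,
\]
which is exactly where the hypothesis enters. So the fix is not a sign-chase but a replacement of your ``$\gamma'$ constant'' claim by ``$\Delta\beta'=-\Delta\gamma'$''; once you have that, the lemma is immediate.
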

\renewcommand{\proofname}{Proof sketch}
\begin{proof}
To prove the statement we observe that moving $Y$ in the aforementioned way decreases~$\beta^{\prime}$ by the same amount by which $\gamma^{\prime}$ is increased. Because the cosine function is differentiable, if the $\epsilon$ is small enough the mentioned decrease of $\beta^{\prime}$ gets arbitrarily close to a value that is proportional to the derivative of the function $-\cos\left(\theta\right)$ at $\beta^{\prime}$. Similarly the increase in $\gamma^{\prime}$ gets arbitrarily close to a value that is proportional to the derivative of the function $\cos\left(\theta\right)$ at $\gamma^{\prime}$. Therefore, if $2\sin\left(\beta^{\prime}\right)-\sin\left(\gamma^{\prime}\right)>0$ the value of $2\cos\left(\beta^{\prime}\right)+\cos\left(\gamma^{\prime}\right)$ increases by moving $Y$ in the aforementioned way.
\end{proof}
\renewcommand{\proofname}{Proof}

We now can start with the analysis of the second arc.
\begin{lemma}\label{A5LemmaIntervall2E} If there is a (global) worst-case exit position on the arc $\wideparen{E_1E_2}$, then it is at $E_1$ or $E_2$.
\end{lemma}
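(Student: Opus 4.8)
The plan is to mimic the structure of the proof of Lemma~\ref{A5LemmaIntervall1E}, but now dealing with an arc on which robot~$R_2$ is picked up \emph{during} its first cut, so that the corresponding meeting point~$M$ lies on the interior of the cut rather than on the boundary. Consequently Proposition~\ref{betagammagleichE} no longer applies directly, and I instead invoke Lemma~\ref{claim1} to control the quantity $2\cos(\beta)+\cos(\gamma)$. First I would fix the notation: for an exit~$E$ on the open arc $\wideparen{E_1E_2}$, robot~$R_1$ finds the exit while still searching on the boundary (before reaching~$C_1$), and robot~$R_2$ is intercepted at a point~$M$ on its first cut while returning from the tip~$P_1^{\prime}$ toward~$C_1^{\prime}$. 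I would note that the movement of both robots is differentiable at the relevant points (the exit is in the interior of a boundary arc, and $M$ is in the interior of the straight cut segment, away from the corners $C_1^{\prime}$ and $P_1^{\prime}$), so Theorem~\ref{notwendigeBed} is applicable and it suffices to show $2\cos(\beta)+\cos(\gamma)\neq 1$ throughout the open arc.

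The key step is to apply Lemma~\ref{claim1} with the decoupled meeting point~$Y$ ranging over the relevant portion of the first cut. I would verify the hypothesis $2\sin(\beta')-\sin(\gamma')>0$ for the configuration: since $R_2$ returns along the cut at the prescribed angle~$\alpha_1^{*}$ and $R_1$ approaches along the boundary, the angle~$\gamma'$ between $\overline{XY}$ and the return direction is small enough (and $\beta'$ large enough) that the inequality holds for all exits on this arc. Lemma~\ref{claim1} then tells me that $2\cos(\beta)+\cos(\gamma)$ is strictly monotonic as the true meeting point~$M$ slides along the cut, which lets me reduce the whole open arc to checking the two extreme meeting-point positions. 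As the exit moves from~$E_1$ toward~$E_2$, the meeting point~$M$ slides monotonically from~$C_1^{\prime}$ (the foot of the cut) toward~$P_1^{\prime}$ and back; by Observation~\ref{obs1} and the monotonicity supplied by Lemma~\ref{claim1}, the value $2\cos(\beta)+\cos(\gamma)$ stays strictly on one side of~$1$ on the open arc, matching the boundary values computed at the endpoints.

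Concretely, I would compute $2\cos(\beta)+\cos(\gamma)$ at the two endpoints $E_1$ and $E_2$ using our parameter values $p_1^{*},\alpha_1^{*},d_1^{*}$ and the meeting protocol, confirming that both endpoint values lie on the same side of~$1$ and that the monotonicity from Lemma~\ref{claim1} forbids any interior crossing. This shows $2\cos(\beta)+\cos(\gamma)\neq 1$ for every exit in the interior of $\wideparen{E_1E_2}$, so by Theorem~\ref{notwendigeBed} none of these interior points can be a worst-case candidate, leaving only the endpoints $E_1$ and~$E_2$. The endpoints themselves are not excluded because the trajectory of~$R_2$ fails to be differentiable there: at~$E_1$ the meeting point is the corner~$C_1^{\prime}$, and at~$E_2$ the meeting point is the tip~$P_1^{\prime}$, both non-smooth points of the cut.

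The main obstacle I anticipate is establishing the sign condition $2\sin(\beta')-\sin(\gamma')>0$ rigorously across the entire arc rather than merely at a sample point: because $\beta'$ and $\gamma'$ vary continuously as~$Y$ moves along the cut and as the exit moves along the boundary, I must argue that the inequality is preserved throughout, which requires bounding these angles in terms of the fixed cut angle~$\alpha_1^{*}$ and the arc-length parameter. I expect to handle this by expressing both angles explicitly via the geometry of the isosceles-type configuration formed by~$E$, $M$, and the cut, and then checking the inequality monotonically; this geometric bookkeeping, together with the numeric endpoint evaluations that depend on solving the transcendental meeting-protocol equations, is where the real work lies.
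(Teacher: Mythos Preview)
Your proposal has a structural error in how the meeting point behaves along this arc, and this breaks the argument.

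First, you misidentify the meeting point for~$E_2$: by the definition of~$E_2$ in the partition, robot~$R_2$ is picked up at~$C_1^{\prime}$ \emph{after completing} the full cut, not at the tip~$P_1^{\prime}$. As the exit moves from~$E_1$ to~$E_2$, the meeting point does not just slide ``from~$C_1^{\prime}$ toward~$P_1^{\prime}$ and back''; rather, on an initial sub-arc the robot is intercepted on the \emph{outgoing} leg $C_1^{\prime}\!\to P_1^{\prime}$, then at an interior point~$S_1$ the meeting occurs exactly at the tip~$P_1^{\prime}$, and only on the final sub-arc is the robot intercepted on the returning leg. Lemma~\ref{claim1} is stated specifically for the returning leg, so it does not cover the outgoing portion, and on the outgoing portion the conform/converse status even flips (at the point~$Q_1$ where the line through $C_1^{\prime},P_1^{\prime}$ meets the boundary again), changing the definition of~$\gamma$.

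Second, and fatally for your monotonicity plan, the quantity $2\cos(\beta)+\cos(\gamma)$ does \emph{not} stay on one side of~$1$ across the open arc. The paper's proof shows it is strictly greater than~$1$ on $\wideparen{E_1Q_1}$ and $\wideparen{Q_1S_1}$ but strictly less than~$1$ on $\wideparen{S_1E_2}$. The crossing is absorbed by the non-differentiability of~$R_2$'s trajectory at the tip~$P_1^{\prime}$ (so Theorem~\ref{notwendigeBed} does not apply at~$S_1$), and~$S_1$ is then eliminated by a direct comparison of evacuation times with~$E_2$. Your scheme of checking the two endpoints and invoking monotonicity cannot succeed, because the two one-sided limits are on opposite sides of~$1$; you must split the arc at~$Q_1$ and~$S_1$, treat the three pieces with separate angle bounds, and dispose of the extra interior candidate~$S_1$ by an evacuation-time comparison.
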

\begin{proof} To prove the statement, we further divide the arc into three parts. For this, we calculate the points $Q_1$ and $S_1$, where $Q_1$ is the second intersection of the boundary and the line through $C_1^{\prime}$ and $P_1^{\prime}$. If the exit is placed at $S_1$, robot $R_2$ will be picked up at $P_1^{\prime}$. We obtain: $E_1\approx \left(-0.58912; 0.80804\right)$, $E_2\approx \left(-0.52403;-0.85170\right)$, $Q_1\approx \left(-0.94262; 0.33386\right)$, $S_1\approx \left(-0.82098; -0.57096\right)$, $C_1^{\prime}\approx \left(0.49247; -0.87033\right)$ and $P_1^{\prime}\approx \left(0.11710; -0.55536\right)$.

Now we analyze the three arcs $\wideparen{E_1Q_1}$, $\wideparen{Q_1S_1}$ and $\wideparen{S_1E_2}$ separately.

\begin{description}
\item [$\wideparen{E_1Q_1}$] In this case robot $R_2$ is picked up on its way from $C_1^{\prime}$ to $P_1^{\prime}$. This follows from the fact that $Q_1$ has a greater second coordinate than $S_1$. Also note that the movement of the robots is conform because of the definition of $Q_1$. We want to show, that the statement $2\cos\left(\beta\right)+\cos\left(\gamma\right) > 1$ holds for each exit on the interior of $\wideparen{E_1Q_1}$. For this we observe that the angle $\beta$ of all inner points is at most as large as the corresponding angle for the exit at point $E_1$. In order to prove this, let $E$ be an exit on $\wideparen{E_1Q_1}$ and $M$ be the corresponding meeting point on the line segment $\overline{C_1^{\prime}P_1^{\prime}}$. If the line segment $\overline{E_1C_1^{\prime}}$ is rotated around the origin so far that the point $E$ is reached, denote the point on which $C_1^{\prime}$ is rotated by $Y$. It is easy to see that the point $Y$ lies on the the arc $\wideparen{IC_1^{\prime}}$. Observe in particular that the line segment~$\overline{C_1^{\prime}P_1^{\prime}}$ is not intersected during the rotation due to the definition of $Q_1$. We further observe, that the angle between the tangent at point $E$ and the line segment $\overline{EY}$ is equal to the angle between the tangent at point $E_1$ and the line segment $\overline{E_1C_1^{\prime}}$ (this is the angle with which we will limit the other angles $\beta$ from above). Denote the angle between the tangent at $E$ and the line segment $\overline{EY}$ by $\beta^{\prime}$. Now observe that the angle $\beta$ for an exit at $E$ is smaller than the angle $\beta^{\prime}$ because the line segment $\overline{EY}$ is above the line segment $\overline{EM}$. Therefore, we can limit the respective $\beta$ for any possible exit on $\wideparen{E_1Q_1}$ by the~$\beta$ for the exit at $E_1$, which is approximately $1.51328$ and hence $2\cos\left(\beta\right)>0.11496$. 

On the other hand, the angle $\gamma$ at the corresponding meeting point for each possible exit on interior of the arc $\wideparen{E_1Q_1}$ (and $Q_1$) can be limited from above by the angle~$\angle E_1P_1^{\prime}Q_1 \approx 0.39474$.
Therefore for each possible exit on the interior of the arc $\wideparen{E_1Q_1}$ (and $Q_1$) and the corresponding meeting point the statement $2\cos\left(\beta\right)+\cos\left(\gamma\right) > 2\cos\left(1.51328\right)+\cos\left(0.39474\right)> 0.11496 + 0.92309>1$ holds. With Theorem \ref{notwendigeBed} we can exclude all exits but $E_1$ (because the movement of robot~$R_2$ is not differentiable at the corresponding meeting point $C_1^{\prime}$) on the arc $\wideparen{E_1Q_1}$ as worst-case candidates.

\item[$\wideparen{Q_1S_1}$] We first note that in this case the movement of the robots is converse (because of the definition of~$Q_1$) and the movement of robot $R_2$ is not differentiable around $P_1^{\prime}$ and therefore we cannot exclude the exit at $S_1$ as  the worst-case candidate with Theorem \ref{notwendigeBed}. With a similar reasoning as in the first case we can limit respective $\beta$ of all other points on the arc $\wideparen{Q_1S_1}$ from above by the~$\beta$ for an exit placed at $Q_1$ and its corresponding meeting point. This statement holds because if we rotate the line segment $\overline{Q_1C_1^{\prime}}$ around the origin until we reach a point $E$ on the arc~$\wideparen{Q_1S_1}$ and denote the point on which $C_1^{\prime}$ is rotated on by $Y$, we observe the following: the line segment $\overline{EY}$ is above the line segment $\overline{EM}$, where $M$ is the corresponding meeting point of an exit at $E$. The one thing left to argue is that while we rotated the line segment~$\overline{Q_1C_1^{\prime}}$ we did not intersect the line segment $\overline{C_1^{\prime}P_1^{\prime}}$. This is because the intersection of the line segment $\overline{Q_1C_1^{\prime}}$ and the perpendicular line that contains the origin $O$ is above the point $P_1^{\prime}$. Therefore we can limit the respective $\beta$ by approximately $1.21306$, which is the obtained value of the angle for an exit placed at $Q_1$.
On the other, hand it is easy to verify that the angle $\gamma$ at the corresponding meeting point for each possible exit on the arc $\wideparen{Q_1S_1}$ (but $S_1$, because the movement of robot~$R_2$ is not differentiable at~$S_1$) can be limited from above by the angle~$\angle Q_1P_1^{\prime}S_1 \approx 0.71477$.

Combining these two observations and with the monotonicity of the cosine function on the interval $\left[0, \pi\right]$ we can state that for any possible exit (but $S_1$) on the arc $\wideparen{Q_1S_1}$ the term $2\cos\left(\beta\right)+\cos\left(\gamma\right)$ is greater than $2\cos\left(1.21306\right)+\cos\left(0.71477\right)>0.7003+0.75524>1$. Therefore with Theorem \ref{notwendigeBed} we can exclude all possible exits (but $S_1$) on the arc $\wideparen{Q_1S_1}$ as possible worst-case candidates.

\item[$\wideparen{S_1E_2}$] Here we want to show that $2\cos\left(\beta\right)+\cos\left(\gamma\right)<1$ holds for all possible exits on the interior of the arc $\wideparen{S_1E_2}$. We first observe that the movement of robot $R_2$ is not differentiable at point~$C_1^{\prime}$, which is the corresponding meeting point for $E_2$ and that the movement of the robots is conform. To prove the statement we decouple the exit and the meeting point and use Lemma \ref{claim1}: Let $X$ be an arbitrary point on the arc $\wideparen{S_1E_2}$ and $Y$ be an arbitrary point on the line segment $\overline{P_1^{\prime}C_1^{\prime}}$. First we will show that $2\sin\left(\beta^{\prime}\right)-\sin\left(\gamma^{\prime}\right)>0$ holds for all possible combinations of~$X$ and~$Y$. To do so we verify the combination of $X$ and $Y$ that minimizes $2\sin\left(\beta^{\prime}\right)$ and the combination that maximizes $\sin\left(\gamma^{\prime}\right)$ separately. It is easy to understand that $2\sin\left(\beta^{\prime}\right)$ is minimized for $X=E_2$ and $Y=C_1^{\prime}$ (note that we allow $Y$ to be on the boundary as it is a lower bound for $2\sin\left(\beta^{\prime}\right)$)  and $\sin\left(\gamma^{\prime}\right)$ is maximized for $X=E_2$ and $Y=P_1^{\prime}$. Calculating the respective angles we have: $2\sin\left(\beta^{\prime}\right)-\sin\left(\gamma^{\prime}\right) > 1.01666-0.90489 > 0$. Therefore, we can use Lemma \ref{claim1} and it is enough to show that for every $X$ on the arc $\wideparen{S_1E_2}$ and $Y=C_1^{\prime}$ the statement $2\cos\left(\beta^{\prime}\right)+\cos\left(\gamma^{\prime}\right){<1}$ holds. This is because Lemma \ref{claim1} says that moving $Y$ closer to $C_1^{\prime}$ only increases the value of $2\cos\left(\beta^{\prime}\right)+\cos\left(\gamma^{\prime}\right)$. Therefore, we need to verify the combinations $X$ and $Y=C_1^{\prime}$ that maximize $2\cos\left(\beta^{\prime}\right)$ and $\cos\left(\gamma^{\prime}\right)$. We once again do this separately to obtain an upper bound. It is straight forward to verify that for $X=E_2$ the respective expressions are maximized. We calculate the respective angles (under the assumption that robot $R_2$ would continue his movement at $C_1^{\prime}$ as he did while returning from the tip of its cut) and obtain $2\cos\left(\beta^{\prime}\right)+\cos\left(\gamma^{\prime}\right) < 1.72232-0.77769 < 1$.
Therefore we can use Theorem \ref{notwendigeBed} to exclude all exits on the interior of the arc $\wideparen{S_1E_2}$ as possible worst-case candidates. It remains to show that~$S_1$ is not a worst-case candidate either. We do this by comparing the evacuation times for an exit at $S_1$ and $E_2$. Recall that the corresponding meeting point for an exit at $S_1$ is at~$P_1^{\prime}$. For an exit at~$S_1$ we obtain an evacuation time of approximately $5.05489$ and for an exit at~$E_2$ we obtain an evacuation time of approximately $5.62335779$. Comparing these evacuation times we see that $S_1$ is not a worst-case candidate.
\end{description}
Altogether we have shown that for all possible exits (but $E_1$ and $E_2$) on the arc~$\wideparen{E_1E_2}$ either $2\cos\left(\beta\right)+\cos\left(\gamma\right)>1$ or $2\cos\left(\beta\right)+\cos\left(\gamma\right)<1$ holds. With Theorem~\ref{notwendigeBed} we excluded all possible exits (but $E_1$ and $E_2$) on the arc $\wideparen{E_1E_2}$ as worst-case candidates.
\end{proof}

For the worst-case candidates on the third arc we get an analogous result.
\begin{lemma}\label{A5LemmaIntervall3E} If there is a (global) worst-case exit position on the arc $\wideparen{E_2E_3}$, then it is at $E_2$ or $E_3$.
\end{lemma}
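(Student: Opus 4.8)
The plan is to proceed exactly as in the proof of Lemma~\ref{A5LemmaIntervall1E}, since the arc $\wideparen{E_2E_3}$ again consists of exits whose corresponding meeting point lies on the boundary. Indeed, for an exit $E$ on this arc robot $R_1$ finds it while still moving along the boundary (before its first cut), and robot $R_2$ is picked up only after it has completed its first cut, hence after it has returned to the boundary; thus both $E$ and its meeting point $M$ lie on the perimeter. First I would check that the movement is conform (both robots continue along the boundary in their respective search directions, so their continuations lie on the same side of the line $\overline{EM}$), so that Proposition~\ref{betagammagleichE} applies and yields $\beta=\gamma=\pi-\frac{x+y}{2}$ with $x=\vert\wideparen{IE}\vert$ and $y=\vert\wideparen{IM}\vert$.

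With $\beta=\gamma$, the necessary condition of Theorem~\ref{notwendigeBed} reduces to deciding the sign of $3\cos(\beta)-1$, and the strategy is to show this is strictly positive on the interior of the arc. Using Proposition~\ref{betagammagleichE} at the endpoint $E_2$, whose meeting point is $C_1^{\prime}$ (so that $y=p_1^{*}$), I would obtain $\beta=\gamma\approx0.533$, giving $2\cos(\beta)+\cos(\gamma)=3\cos(\beta)\approx2.58>1$. By Observation~\ref{obs1}, as the exit moves from $E_2$ towards $E_3$ both $x$ and $y$ increase, so $x+y$ increases and $\beta=\gamma$ decreases; since $\beta\in[0,\pi]$ and cosine is decreasing there, $\cos(\beta)$ only grows. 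Hence $2\cos(\beta)+\cos(\gamma)>1$ on the whole interior, with the endpoint value at $E_2$ serving as a lower bound — the mirror image of the upper-bounding argument used in Lemma~\ref{A5LemmaIntervall1E}, where the analogous quantity stayed below $1$.

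Having established $2\cos(\beta)+\cos(\gamma)\neq1$ on the interior, Theorem~\ref{notwendigeBed} excludes every interior exit as a worst-case candidate. The two endpoints survive for the usual reason: at $E_2$ the meeting point $C_1^{\prime}$ is a vertex of $R_2$'s trajectory (where it leaves for, respectively returns from, its first cut) and is therefore not a point of differentiability, while at $E_3=C_1$ the exit itself sits at the vertex of $R_1$'s own trajectory (where it would turn inward for the first cut). In both cases the differentiability hypothesis of Theorem~\ref{notwendigeBed} fails, so these points cannot be ruled out and remain the only possible worst-case candidates on $\wideparen{E_2E_3}$.

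The only points needing genuine care are the hypotheses that let me invoke Proposition~\ref{betagammagleichE} and Observation~\ref{obs1}: namely that for every exit on the arc the meeting point $M$ lies strictly beyond $C_1^{\prime}$ on the boundary (so that $M$ is a boundary point and no cut is crossed) and that the movement is indeed conform throughout. Once this is confirmed for the chosen parameters, the remainder is a routine monotone sign check identical in structure to Lemma~\ref{A5LemmaIntervall1E}, the single numerical input being the value $\beta=\gamma\approx0.533$ at $E_2$; I therefore expect no substantive obstacle.
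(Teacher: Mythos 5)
Your proposal is correct and follows essentially the same route as the paper's proof: invoke Proposition~\ref{betagammagleichE} for conform movement with boundary meeting points, compute $\beta=\gamma\approx0.533$ at $E_2$ (meeting point $C_1^{\prime}$), use Observation~\ref{obs1} plus monotonicity of cosine to get $2\cos(\beta)+\cos(\gamma)>1$ on the interior, and retain $E_2$ and $E_3$ only because of non-differentiability of $R_2$'s trajectory at $C_1^{\prime}$ and of $R_1$'s trajectory at $E_3$, respectively. The paper likewise treats the fact that meeting points for this arc stay on the boundary (between the two cuts of $R_2$) as a consequence of the chosen parameter values, so no substantive difference remains.
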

\begin{proof} We first note that the movement of robot $R_2$ is not differentiable at the corresponding meeting point $C_1^{\prime}$  of an exit placed at $E_2$. Furthermore, the movement of robot $R_1$ is not differentiable at~$E_3$. Therefore we cannot use Theorem~\ref{notwendigeBed} to exclude $E_2$ and $E_3$ as worst-case candidates.

Recall that for all possible exits on the arc $\wideparen{E_2E_3}$ robot $R_2$ is picked up after performing its first cut (but before performing its second cut) and robot $R_1$ finds the exit before performing its first cut. In particular, for all these possible exits their corresponding meeting points lie on the boundary. Therefore Proposition~\ref{betagammagleichE} is applicable. Also note that we are in the case of conform movement.

With Proposition \ref{betagammagleichE} we obtain $\beta=\gamma\approx 0.53325$ for the exit at $E_2$ and the corresponding meeting point at $C_1^{\prime}$, assuming that robot $R_2$ continues its search on the boundary at point $C_1^{\prime}$. This yields $2\cos\left(0.53325\right)+\cos\left(0.53325\right)>2.58347> 1$. Note that by Proposition \ref{betagammagleichE} and Observation \ref{obs1} the obtained value of approximately $0.53325$ is an upper bound on $\beta=\gamma$ for any exit on the interior of the arc $\wideparen{E_2E_3}$ and its corresponding meeting point. In conjunction with the monotonicity of the cosine function on the interval $\left[0, \pi\right]$ the statement $2\cos\left(\beta\right)+\cos\left(\gamma\right)> 1$ holds for any exit on the interior of the arc $\wideparen{E_2E_3}$ and its corresponding meeting point. Therefore we can use Theorem~\ref{notwendigeBed} to exclude any exit on the interior of the arc $\wideparen{E_2E_3}$ as a possible worst-case candidate.
\end{proof}
Before we start with the next arc, we recall that we explicitly specified that $E_3$ does not belong to the arc. To still have a closed arc we add an artificial point~$E_3^{\sim}$, that coincides with $E_3$. However, for an exit at $E_3^{\sim}$ robot $R_1$ finds the exit immediately after performing its first cut. Without the addition of the artificial point $E_3^{\sim}$ our arc would be half open and there could be a sequence of points that converges towards $E_3$ with increasing evacuation times but the exit with the largest evacuation time would not belong to the arc. In \cite{A4} the authors observe that the evacuation time for the artificial exit at $E_3^{\sim}$ cannot be smaller than the evacuation time for the exit at $E_3$. This applies, since otherwise robot $R_1$ could improve the evacuation time for the exit at $E_3$ by simulating the movement for the exit at $E_3^{\sim}$.
\begin{lemma}\label{A5LemmaIntervall4E} If there is a (global) worst-case exit position on the arc $\wideparen{E_3^{\sim}E_4}$, then it is at $E_3^{\sim}$ or $E_4$.
\end{lemma}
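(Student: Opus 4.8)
The plan is to mirror the proof of Lemma~\ref{A5LemmaIntervall2E}, replacing the first cut by the second. First I would introduce analogues of $Q_1$ and $S_1$: let $Q_2$ be the second intersection of the boundary with the line through $C_2^{\prime}$ and $P_2^{\prime}$, and let $S_2$ be the exit for which robot $R_2$ is picked up exactly at the tip $P_2^{\prime}$ of its second cut. As the exit moves from $E_3^{\sim}$ to $E_4$, the corresponding meeting point travels from $C_2^{\prime}$ into the cut, reaches $P_2^{\prime}$ at $S_2$, and returns to $C_2^{\prime}$ at $E_4$, so the natural ordering along the arc is $E_3^{\sim}, Q_2, S_2, E_4$. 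I would then numerically locate these points (together with $C_2^{\prime}$ and $P_2^{\prime}$) for the parameters $p_1^{*},\dots,d_2^{*}$, confirm this ordering, and note that $R_1$ finds every exit on this arc while moving along the boundary between $C_1$ and $C_2$, so its local trajectory is the boundary tangent and the earlier first cut enters only through the travel time that fixes the meeting point, never through the angles $\beta$ and $\gamma$.

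On $\wideparen{E_3^{\sim}Q_2}$ (meeting point on the way into the cut, movement conform) and on $\wideparen{Q_2 S_2}$ (meeting point still on the way in, but movement converse) I would argue as in Lemma~\ref{A5LemmaIntervall2E} that $2\cos(\beta)+\cos(\gamma)>1$. The key tool is the rotation argument: rotating $\overline{E_3^{\sim}C_2^{\prime}}$ (respectively $\overline{Q_2 C_2^{\prime}}$) about the origin until it meets a given exit $E$ carries $C_2^{\prime}$ to a point $Y$ for which $\overline{EY}$ lies above the true meeting segment $\overline{EM}$, so $\beta$ is bounded above by its value at the left endpoint of the sub-arc, while $\gamma$ is bounded above by the fixed angle $\angle E_3^{\sim}P_2^{\prime}Q_2$ (respectively $\angle Q_2 P_2^{\prime}S_2$). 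Here I must check that the rotation never sweeps across $\overline{C_2^{\prime}P_2^{\prime}}$, which follows from the definition of $Q_2$ and, for the converse sub-arc, from the intersection of the rotated segment with the perpendicular through $O$ lying above $P_2^{\prime}$. Monotonicity of the cosine on $[0,\pi]$ then yields the strict inequality, and Theorem~\ref{notwendigeBed} excludes every interior point (and $Q_2$); the endpoints $E_3^{\sim}$ and $S_2$ survive only because $R_2$'s movement fails to be differentiable at the meeting points $C_2^{\prime}$ and $P_2^{\prime}$.

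On the remaining sub-arc $\wideparen{S_2 E_4}$ (meeting point on the way back, movement conform) I would instead show $2\cos(\beta)+\cos(\gamma)<1$ by way of Lemma~\ref{claim1}. Decoupling the exit $X\in\wideparen{S_2 E_4}$ from the meeting point $Y\in\overline{P_2^{\prime}C_2^{\prime}}$, I would first verify $2\sin(\beta^{\prime})-\sin(\gamma^{\prime})>0$ for all admissible pairs, separately minimizing $2\sin(\beta^{\prime})$ and maximizing $\sin(\gamma^{\prime})$ (the extremes occurring at $X=E_4$ with $Y=C_2^{\prime}$ and $Y=P_2^{\prime}$, respectively). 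Lemma~\ref{claim1} then lets me slide $Y$ onto $C_2^{\prime}$, so it suffices to bound $2\cos(\beta^{\prime})+\cos(\gamma^{\prime})$ for $Y=C_2^{\prime}$, whose maximum over $X$ I expect at $X=E_4$; evaluating the angles there should give a value strictly below $1$. This excludes all interior exits, and I would finally rule out $S_2$ by computing the evacuation times for exits at $S_2$ (meeting at $P_2^{\prime}$) and at $E_4$ and checking that the former is strictly smaller.

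The overall scheme transfers essentially verbatim from Lemma~\ref{A5LemmaIntervall2E}, so the main obstacle is the geometric bookkeeping for the second cut. Because $\alpha_2^{*}$ and $d_2^{*}$ differ substantially from the first cut and because $R_1$ now arrives on a later stretch of the boundary, I must re-confirm that $Q_2$ really lies on the way-in portion before $S_2$, that the two rotation arguments never cross $\overline{C_2^{\prime}P_2^{\prime}}$, and above all that the sign condition $2\sin(\beta^{\prime})-\sin(\gamma^{\prime})>0$ holds uniformly on $\wideparen{S_2 E_4}$, since this is exactly what licenses the use of Lemma~\ref{claim1}. Should any of these qualitative features degenerate for the second cut, the three-way split would need to be refined, so verifying that the second cut behaves like the first is the crux.
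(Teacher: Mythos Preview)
Your proposal is essentially the paper's own proof: the same three-way split via $Q_2$ and $S_2$, the same rotation arguments on the first two sub-arcs, and the same use of Lemma~\ref{claim1} on $\wideparen{S_2E_4}$.

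One point needs correcting. The meeting point for the exit at $E_3^{\sim}$ is \emph{not} $C_2^{\prime}$; by the time $R_1$ reaches $E_3^{\sim}$ it has already spent $2d_1^{*}$ on its first cut, so $R_2$ is picked up at a point $M_3^{\sim}$ already inside the segment $\overline{C_2^{\prime}P_2^{\prime}}$ (at distance $\approx 0.161$ from $C_2^{\prime}$). Consequently, the reason $E_3^{\sim}$ cannot be excluded by Theorem~\ref{notwendigeBed} is not that $R_2$'s movement is non-differentiable at $C_2^{\prime}$, but that $R_1$'s movement is non-differentiable at the exit $E_3^{\sim}$ itself, where it has just returned from its first cut. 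Your rotation argument, which rotates $\overline{E_3^{\sim}C_2^{\prime}}$, still gives the correct upper bound for $\beta$ on $\wideparen{E_3^{\sim}Q_2}$ (the paper calls this decoupled angle $\beta^{\prime}$ and notes it dominates the true $\beta$ at $E_3^{\sim}$ with meeting point $M_3^{\sim}$), so the computation goes through unchanged; only the bookkeeping about which endpoint survives for which reason needs adjusting.
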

\begin{proof} To prove the statement, we further divide the arc into three parts. For this, we calculate the points $Q_2$ and $S_2$, where $Q_2$ is the second intersection of the boundary and the line through~$C_2^{\prime}$ and~$P_2^{\prime}$. If the exit is placed at $S_2$, robot $R_2$ will be picked up at $P_2^{\prime}$. We obtain: $E_3^{\sim}=E_3\approx\left(-0.492471164; -0.870328761\right)$, $E_4\approx\left(-0.16843;-0.98571\right)$, $Q_2\approx\left(-0.492471152; -0.870328768\right)$ and also $S_2\approx\left(-0.26963; -0.96296\right)$, $C_2^{\prime}\approx\left(0.16706; -0.98595\right)$ and $P_2^{\prime}\approx\left(0.00252; -0.95710\right)$. Note that~$Q_2$ has a smaller second coordinate than $E_3^{\sim}$ and hence robot $R_1$ reaches the point $Q_2$ after $E_3^{\sim}$.

Now we analyze the three arcs $\wideparen{E_3^{\sim}Q_2}$, $\wideparen{Q_2S_2}$ and $\wideparen{S_2E_4}$ separately.

\begin{description}
\item [$\wideparen{E_3^{\sim}Q_2}$] We first note that our parameters are chosen in such a way that the meeting point $M_3^{\sim}$ for the exit at $E_3^{\sim}$ is on the line segment~$\overline{C_2^{\prime}P_2^{\prime}}$. Once again we limit the respective $\beta$ for any exit on the arc $\wideparen{E_3^{\sim}Q_2}$ from above. To do this we decouple exit and meeting point. Instead of the~$\beta$ for the exit at $E_3^{\sim}$ and the corresponding meeting point at $M_3^{\sim}$, we calculate the angle~$\beta^{\prime}$ for the exit at $E_3^{\sim}$ and consider $C_2^{\prime}$ as the meeting point. We obtain $\beta^{\prime} \approx0.34139$. Note that because of the definition of $Q_2$ the angle~$\beta^{\prime}$ is greater than the respective $\beta$ for the exit at~$E_3^{\sim}$ and the corresponding meeting point at $M_3^{\sim}$. It remains to show that the angle~$\beta^{\prime}$ is an upper bound for the respective angle $\beta$ of any exit on the arc~$\wideparen{E_3^{\sim}Q_2}$. Therefore let $E$ be an exit on the arc $\wideparen{E_3^{\sim}Q_2}$ and $M$ be the corresponding meeting point on the line segment~$\overline{C_2^{\prime}P_2^{\prime}}$. If the line segment $\overline{E_3^{\sim}C_2^{\prime}}$ is rotated around the origin so far that the point $E$ is reached, denote the point on which $C_2^{\prime}$ is rotated by $Y$. Observe in particular that the line segment~$\overline{C_2^{\prime}P_2^{\prime}}$ is not intersected during the rotation due to the definition of $Q_2$. We further observe, that the angle between the tangent at point $E$ and the line segment $\overline{EY}$ is equal to the angle between the tangent at point $E_3^{\sim}$ and the line segment $\overline{E_3^{\sim}C_2^{\prime}}$ (this angle is $\beta^{\prime}$). Because the line segment~$\overline{EY}$ is above the line segment $\overline{EM}$, where $M$ is the corresponding meeting point of the exit at~$E$, the angle $\beta$ for an exit at $E$ is smaller than the angle $\beta^{\prime}$. Therefore, we can limit the respective $\beta$ for any possible exit on $\wideparen{E_3^{\sim}Q_2}$ by $\beta^{\prime}\approx0.34139$.
 
On the other hand, the angle $\gamma$ at the corresponding meeting point for each possible exit of the arc $\wideparen{E_3^{\sim}Q_2}$ can be limited from above by the angle~$\angle E_3^{\sim}P_2^{\prime}Q_2\approx 0.00000001$. 

Therefore with the monotonicity of the cosine function, the following statement $2\cos\left(\beta\right)\allowbreak+\allowbreak\cos\left(\gamma\right)>2\cos\left(0.34139\right)+\cos\left(0.00000001\right)>1.88458+0.99999999>1$ holds for each possible exit on the arc $\wideparen{E_3^{\sim}Q_2}$ and the corresponding meeting point. With Theorem \ref{notwendigeBed} we can exclude all exits but~$E_3^{\sim}$ (because the movement of robot $R_1$ is not differentiable at $E_3^{\sim}$) on the arc~$\wideparen{E_3^{\sim}Q_2}$ as worst-case candidates.

\item [$\wideparen{Q_2S_2}$] We first note that in this case the movement of the robots is converse (because of the definition of~$Q_2$) and the movement of robot $R_2$ is not differentiable around $P_2^{\prime}$ and therefore we cannot exclude the exit at $S_2$ as  the worst-case candidate with Theorem \ref{notwendigeBed}. With a similar reasoning as in the first case we can limit respective $\beta$ of all other points on the arc $\wideparen{Q_2S_2}$ from above by the~$\beta$ for an exit placed at $Q_2$ and its corresponding meeting point. This statement holds because if we rotate the line segment~$\overline{Q_2C_2^{\prime}}$ around the origin until we reach a point~$E$ on the arc~$\wideparen{Q_2S_2}$ and denote the point on which $C_2^{\prime}$ is rotated on by $Y$, we observe the following: the line segment~$\overline{EY}$ is above the line segment~$\overline{EM}$, where $M$ is the corresponding meeting point of an exit at $E$. The one thing left to argue is that while we rotated the line segment~$\overline{Q_2C_2^{\prime}}$ we did not intersect the line segment~$\overline{C_2^{\prime}P_2^{\prime}}$. This is because the intersection of the line segment~$\overline{Q_2C_2^{\prime}}$ and the perpendicular line that contains the origin $O$ is above the point $P_2^{\prime}$. Therefore we can limit the respective $\beta$ by approximately $0.34138552$, which is the obtained value of the angle for an exit placed at $Q_2$.
On the other, hand it is easy to verify that the angle $\gamma$ at the corresponding meeting point for each possible exit on the arc $\wideparen{Q_2S_2}$ (but $S_2$, because the movement of robot~$R_2$ is not differentiable at~$S_2$) can be limited from above by the angle~$\angle Q_2P_2^{\prime}S_2 \approx 0.195072$.

Combining these two observations and with the monotonicity of the cosine function on the interval $\left[0, \pi\right]$ we can state that for any possible exit (but $S_2$) on the arc $\wideparen{Q_2S_2}$ the term $2\cos\left(\beta\right)+\cos\left(\gamma\right)$ is greater than $2\cos\left(0.34138552\right)\allowbreak+\cos\left(0.195072\right)>1.884583+0.98103>1$. Therefore with Theorem \ref{notwendigeBed} we can exclude all possible exits (but $S_2$) on the arc $\wideparen{Q_2S_2}$ as possible worst-case candidates.

\item[$\wideparen{S_2E_4}$] Here we want to show that $2\cos\left(\beta\right)+\cos\left(\gamma\right)<1$ holds for all possible exits on the interior of the arc $\wideparen{S_2E_4}$. We first observe that the movement of robot $R_2$ is not differentiable at point~$C_2^{\prime}$, which is the corresponding meeting point for $E_4$ and that the movement of the robots is conform. To prove the statement we decouple the exit and the meeting point and use Lemma \ref{claim1}: Let~$X$ be an arbitrary point on the arc $\wideparen{S_2E_4}$ and $Y$ be an arbitrary point on the line segment~$\overline{P_2^{\prime}C_2^{\prime}}$. First we will show that $2\sin\left(\beta^{\prime}\right)-\sin\left(\gamma^{\prime}\right)>0$ holds for all possible combinations of~$X$ and~$Y$. To do so we verify the combination of $X$ and $Y$ that minimizes $2\sin\left(\beta^{\prime}\right)$ and the combination that maximizes $\sin\left(\gamma^{\prime}\right)$ separately. It is easy to understand that $2\sin\left(\beta^{\prime}\right)$ is minimized for $X=E_4$ and $Y=C_2^{\prime}$ (note that we allow $Y$ to be on the boundary as it is a lower bound for $2\sin\left(\beta^{\prime}\right)$)  and $\sin\left(\gamma^{\prime}\right)$ is maximized for $X=E_4$ and $Y=P_2^{\prime}$. Calculating the respective angles we have: $2\sin\left(\beta^{\prime}\right)-\sin\left(\gamma^{\prime}\right) > 0.33549 - 0.33289> 0$. Therefore, we can use Lemma \ref{claim1} and it is enough to show that for every $X$ on the arc $\wideparen{S_2E_4}$ and $Y=C_2^{\prime}$ the statement $2\cos\left(\beta^{\prime}\right)+\cos\left(\gamma^{\prime}\right){<1}$ holds. This is because Lemma \ref{claim1} says that moving $Y$ closer to $C_2^{\prime}$ only increases the value of $2\cos\left(\beta^{\prime}\right)+\cos\left(\gamma^{\prime}\right)$. Therefore, we need to verify the combinations $X$ and $Y=C_2^{\prime}$ that maximize $2\cos\left(\beta^{\prime}\right)$ and $\cos\left(\gamma^{\prime}\right)$. We once again do this separately to obtain an upper bound. It is straight forward to verify that for $X=E_4$ the respective expressions are maximized. We calculate the respective angles (under the assumption that robot $R_2$ would continue his movement at $C_2^{\prime}$ as he did while returning from the tip of its cut) and obtain $2\cos\left(\beta^{\prime}\right)+\cos\left(\gamma^{\prime}\right) < 1.971661-0.985099 < 1$.
Therefore we can use Theorem \ref{notwendigeBed} to exclude all exits on the interior of the arc $\wideparen{S_2E_4}$ as possible worst-case candidates. It remains to show that~$S_2$ is not a worst-case candidate either. We do this by comparing the evacuation times for an exit at $S_2$ and $E_4$. Recall that the corresponding meeting point for an exit at~$S_2$ is at~$P_2^{\prime}$. For an exit at~$S_2$ we obtain an evacuation time of approximately $5.39304$ and for an exit at~$E_4$ we obtain an evacuation time of approximately $5.62335779$. Comparing these evacuation times we see that $S_2$ is not a worst-case candidate.
\end{description}
Altogether we have shown that for all possible exits (but $E_3^{\sim}$ and $E_4$) on the arc $\wideparen{E_3^{\sim}E_4}$ either $2\cos\left(\beta\right)+\cos\left(\gamma\right)>1$ or $2\cos\left(\beta\right)+\cos\left(\gamma\right)<1$ holds. With Theorem \ref{notwendigeBed} we excluded all possible exits (but $E_3^{\sim}$ and $E_4$) on the arc $\wideparen{E_3^{\sim}E_4}$ as worst-case candidates.
\end{proof}
We continue with the analysis of the fifth segment.
\begin{lemma}\label{A5LemmaIntervall5E} If there is a (global) worst-case exit position on the arc $\wideparen{E_4E_5}$, then it is at $E_4$ or $E_5$.
\end{lemma}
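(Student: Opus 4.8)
The plan is to mirror the argument for the third arc (Lemma~\ref{A5LemmaIntervall3E}) essentially verbatim, since the fifth arc $\wideparen{E_4E_5}$ is its exact analogue with the first cut replaced by the second: on both arcs robot $R_1$ finds the exit \emph{before} performing a cut while robot $R_2$ is picked up \emph{after} completing the corresponding cut, so that every corresponding meeting point lies back on the boundary. In particular, unlike on the arcs $\wideparen{E_1E_2}$ and $\wideparen{E_3^{\sim}E_4}$, no subdivision into converse-movement pieces is needed here, and the movement is conform throughout. First I would record that the movement of $R_2$ is not differentiable at $C_2^{\prime}$, which is the meeting point for an exit at $E_4$, and that the movement of $R_1$ is not differentiable at $E_5$, where $R_1$ is just about to start its second cut; hence Theorem~\ref{notwendigeBed} cannot be used to rule out $E_4$ or $E_5$, which is precisely why these two endpoints survive.

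For every exit in the interior of the arc I would invoke Proposition~\ref{betagammagleichE}: since the movement is conform and the meeting point is on the boundary, we have $\beta=\gamma=\pi-\frac{x+y}{2}$ with $x=\vert\wideparen{IE}\vert$ and $y=\vert\wideparen{IM}\vert$. Evaluating this at the exit $E_4$ with meeting point $C_2^{\prime}$ (so that $y=p_2^{*}$), and assuming that $R_2$ would continue along the boundary at $C_2^{\prime}$, gives a small value $\beta=\gamma\approx0.169$. By Observation~\ref{obs1}, moving the exit from $E_4$ towards $E_5$ increases both $x$ and $y$ and hence decreases $\pi-\frac{x+y}{2}$, so the value at $E_4$ is an \emph{upper} bound on $\beta=\gamma$ over the whole interior of $\wideparen{E_4E_5}$. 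Since cosine is decreasing on $[0,\pi]$, this upper bound on the angle yields the lower bound $2\cos(\beta)+\cos(\gamma)=3\cos(\beta)>3\cos(0.169)>1$ for every interior exit, so Theorem~\ref{notwendigeBed} excludes all of them as worst-case candidates.

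The step I expect to require the most care is confirming that the arc is genuinely ``clean'', i.e.\ that for \emph{every} exit on $\wideparen{E_4E_5}$ robot $R_2$ has already returned to the boundary by the time it is picked up, so that Proposition~\ref{betagammagleichE} really applies throughout and no converse-movement sub-interval (as in Lemmas~\ref{A5LemmaIntervall2E} and~\ref{A5LemmaIntervall4E}) has been overlooked. This rests on the well-definedness remark preceding the partition of $\wideparen{II^{\prime}}$ — that an exit found after a cut cannot lead to a meeting before the other robot performs its next cut — together with a check of the concrete values $p_2^{*}$, $\alpha_2^{*}$, $d_2^{*}$. Once this is confirmed, the monotonicity direction of $\pi-\frac{x+y}{2}$ and the single numerical evaluation $3\cos(0.169)>1$ finish the proof with no further obstacles.
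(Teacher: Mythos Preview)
Your proposal is correct and follows essentially the same approach as the paper: note the non-differentiability at the endpoints, apply Proposition~\ref{betagammagleichE} to get $\beta=\gamma$, use Observation~\ref{obs1} to see that the value at $E_4$ (with meeting point $C_2^{\prime}$) is an upper bound, and then finish with $3\cos(0.169)>1$ and Theorem~\ref{notwendigeBed}. The paper obtains the matching value $\beta=\gamma\approx 0.16855$. Your final paragraph about checking that the arc is ``clean'' is over-cautious: the partition of $\wideparen{II^{\prime}}$ is set up so that, by definition of $\wideparen{E_4E_5}$, robot $R_2$ is picked up after completing its second cut, so the meeting point lies on the boundary automatically and no separate verification is needed.
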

\begin{proof} We first note that the movement of robot $R_2$ is not differentiable at the corresponding meeting point $C_2^{\prime}$  of an exit placed at $E_4$. Furthermore, the movement of robot $R_1$ is not differentiable at~$E_5$. Therefore we cannot use Theorem~\ref{notwendigeBed} to exclude $E_4$ and $E_5$ as worst-case candidates.

Recall that for all possible exits on the arc $\wideparen{E_4E_5}$ robot $R_2$ is picked up after performing its second cut and robot $R_1$ finds the exit before performing its second cut (but after continuing its search on the boundary after completing its first cut). In particular, for all these possible exits their corresponding meeting points lie on the boundary. Therefore Proposition \ref{betagammagleichE} is applicable. Also note that we are in the case of conform movement.

With Proposition \ref{betagammagleichE} and Observation \ref{obs1} we can obtain an upper bound for the angles $\beta=\gamma$ of any exit on the interior of the arc $\wideparen{E_4E_5}$ and its corresponding meeting point. We do so by calculating the value of $\beta=\gamma$ for the exit placed at $E_4$ and its corresponding meeting point $C_2^{\prime}$ (assuming that robot $R_2$ would continue its search on the boundary at point $C_2^{\prime}$). For this exit and meeting point we obtain $\beta=\gamma\approx 0.16855$. With the monotonicity of the cosine function on the interval $\left[0, \pi\right]$ the statement $2\cos\left(\beta\right)+\cos\left(\gamma\right)>2\cos\left(0.16855\right)+\cos\left(0.16855\right)> 2.95748>1$ holds for any exit on the interior of the arc $\wideparen{E_4E_5}$ and its corresponding meeting point.

Therefore we can use Theorem \ref{notwendigeBed} to exclude any exit on the interior of the arc $\wideparen{E_4E_5}$ as a possible worst-case candidate.
\end{proof}
Analogous to our preparation of Lemma \ref{A5LemmaIntervall4E} we again recall that we explicitly specified that $E_5$ does not belong to the final  arc. To still have a closed arc we add an artificial point $E_5^{\sim}$, that coincides with $E_5$. For an exit at $E_5^{\sim}$ robot $R_1$ finds the exit immediately after performing its second cut. Again the evacuation time for the artificial exit at $E_5^{\sim}$ cannot be smaller than the evacuation time for the exit at $E_5$. This applies, since otherwise robot $R_1$ could improve the evacuation time for the exit at $E_5$ by simulating the movement for the exit at~$E_5^{\sim}$.
\begin{lemma}\label{A5LemmaIntervall6E} If there is a (global) worst-case exit position on the arc $\wideparen{E_5^{\sim}I^{\prime}}$, then it is at $E_5^{\sim}$.
\end{lemma}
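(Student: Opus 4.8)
The plan is to reuse the argument template of Lemmas~\ref{A5LemmaIntervall3E} and \ref{A5LemmaIntervall5E}, since on the arc $\wideparen{E_5^{\sim}I^{\prime}}$ both robots have already completed both cuts and are once more searching the boundary towards $I^{\prime}$. First I would record that the movement of robot $R_1$ is not differentiable at $E_5^{\sim}$ (this is the kink at $C_2$ where the second cut returns to the boundary), so that Theorem~\ref{notwendigeBed} cannot be used to exclude $E_5^{\sim}$; this is consistent with the statement, which permits $E_5^{\sim}$ as a candidate. It then remains to exclude every point of the arc other than $E_5^{\sim}$, i.e. every interior point together with the endpoint $I^{\prime}$.

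For the interior points I would first argue that the corresponding meeting points lie on the boundary. By the symmetry of the trajectories, whenever robot $R_1$ finds the exit at a point $E$ strictly beyond $C_2$ on the boundary, robot $R_2$ is located at the mirror-image point strictly beyond $C_2^{\prime}$ and is likewise searching the boundary towards $I^{\prime}$; since no further cut follows the second one, $R_2$ is picked up while on the boundary segment from $C_2^{\prime}$ to $I^{\prime}$. Hence the meeting point $M$ lies on the boundary, the movement is conform, and Proposition~\ref{betagammagleichE} applies, giving $\beta=\gamma=\pi-\frac{x+y}{2}$ with $x=\vert\wideparen{IE}\vert$ and $y=\vert\wideparen{IM}\vert$.

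Next I would bound $\beta=\gamma$ from above. By Observation~\ref{obs1}, moving $E$ towards $I^{\prime}$ (increasing $x$) increases $y$, hence increases $x+y$ and decreases $\beta=\gamma$; therefore $\beta=\gamma$ attains its largest value on the arc as $E\to E_5^{\sim}$. Evaluating Proposition~\ref{betagammagleichE} there (the meeting point for $E_5^{\sim}$ sits on the boundary just short of $I^{\prime}$, so that $y\approx\pi$) yields $\beta=\gamma\approx(\pi-p_2^{*})/2\approx 0.0839$, which is far below $\arccos(1/3)\approx 1.231$. Consequently, for every interior exit and its meeting point, $2\cos\left(\beta\right)+\cos\left(\gamma\right)=3\cos\left(\beta\right)\geq 3\cos\left(0.0839\right)>1$, and Theorem~\ref{notwendigeBed} excludes every interior point of the arc as a worst-case candidate.

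Finally I would dispose of the endpoint $I^{\prime}$ directly. For the exit at $I^{\prime}$ both robots reach $I^{\prime}$ simultaneously at the end of their symmetric trajectories, so they find the exit together and evacuate with evacuation time equal to the full trajectory length $\pi+2d_1^{*}+2d_2^{*}\approx 4.46$, which is strictly smaller than the evacuation time at $E_5^{\sim}$ (and than the claimed bound $5.6234$); hence $I^{\prime}$ is not a worst-case candidate. Combining these observations, $E_5^{\sim}$ is the only point of the arc that cannot be excluded. The main obstacle I anticipate is the careful verification that the meeting point remains on the boundary strictly before $I^{\prime}$ for all interior exits, in particular for exits arbitrarily close to $I^{\prime}$, where the meeting degenerates; this is what guarantees that Proposition~\ref{betagammagleichE} stays applicable on the whole interior and that the degenerate case at $I^{\prime}$ can be isolated and handled separately by the evacuation-time comparison.
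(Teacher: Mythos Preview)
Your proposal follows essentially the same route as the paper: non-differentiability at $E_5^{\sim}$, Proposition~\ref{betagammagleichE} plus Observation~\ref{obs1} to bound $\beta=\gamma$ from above at $E_5^{\sim}$, Theorem~\ref{notwendigeBed} to eliminate interior points, and a direct evacuation-time comparison to dispose of $I^{\prime}$. The only slip is in the last step: the evacuation time for the exit at $I^{\prime}$ is $1+\pi+2d_1^{*}+2d_2^{*}\approx 5.456$, not $\pi+2d_1^{*}+2d_2^{*}\approx 4.46$, since you must include the initial unit-length segment from the centre to $I$; the comparison with $E_5^{\sim}$ still goes through, so this does not affect the argument.
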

\begin{proof} We first note, that the movement of robot $R_1$ is not differentiable for the exit at $E_5^{\sim}$. Therefore we cannot use Theorem \ref{notwendigeBed} to exclude $E_5^{\sim}$ as the worst-case candidate.

Recall that for all possible exits on the arc $\wideparen{E_5^{\sim}I^{\prime}}$ robot $R_2$ is picked up after performing its second cut and robot $R_1$ finds the exit after performing its second cut. In particular, for all these possible exits their corresponding meeting points lie on the boundary. Therefore Proposition \ref{betagammagleichE} is applicable. Also note that we are in the case of conform movement.

With Proposition \ref{betagammagleichE} we obtain $\beta=\gamma\approx 0.08398$ for the exit at $E_5^{\sim}$ and the corresponding meeting point at $M_5^{\sim}$, which satisfies $\vert\wideparen{IM_5^{\sim}}\vert\approx 3.141494005121$ (assuming that robot $R_1$ would continue its search on the boundary at point $E_5^{\sim}$). This yields $2\cos\left(0.08398\right)+\cos\left(0.08398\right)>2.98942> 1$. Note that by Proposition \ref{betagammagleichE} and Observation \ref{obs1} the obtained value of approximately $0.08398$ is an upper bound on $\beta=\gamma$ for any exit on the interior of the arc $\wideparen{E_5^{\sim}I^{\prime}}$ and its corresponding meeting point. In conjunction with the monotonicity of the cosine function on the interval $\left[0, \pi\right]$ the statement $2\cos\left(\beta\right)+\cos\left(\gamma\right)> 1$ holds for any exit on the interior of the arc $\wideparen{E_5^{\sim}I^{\prime}}$ and its corresponding meeting point. Therefore we can use Theorem \ref{notwendigeBed} to exclude any exit on the interior of the arc~$\wideparen{E_5^{\sim}I^{\prime}}$ as a possible worst-case candidate.

It remains to show that the exit at $I^{\prime}$ is not a worst-case candidate. For this placement the exit and its corresponding meeting point would coincide at $I^{\prime}$, since both robots would reach $I^{\prime}$ at the same time. To exclude $I^{\prime}$ as potential worst-case candidate we compare the evacuations time for an exit placed at $I^{\prime}$ and $E_5^{\sim}$. As noted, both robots reach $I^{\prime}$ at the same time, after traveling a distance of~$\pi$ on the boundary each and after performing both cuts. Therefore the evacuation time for the exit placed at $I^{\prime}$ is $1+\pi+2d_1+2d_2\approx 5.45572$.
On the other hand, we obtain an evacuation time for the exit placed at $E_5^{\sim}$ of approximately $5.62335778$. Comparing both evacuation times we can exclude $I^{\prime}$ as the worst-case candidate.
\end{proof}
We summarize the statements of the Lemmas \ref{A5LemmaIntervall1E} through \ref{A5LemmaIntervall6E} as follows.
\begin{theorem}\label{SatzA5E} For Algorithm $\Alg\left(p_1^{*}, \alpha_1^{*}, d_1^{*}, p_2^{*}, \alpha_2^{*}, d_2^{*}\right)$ the worst-case exit position is at $E_1$, $E_2$, $E_3^{\sim}$, $E_4$ or $E_5^{\sim}$.
\end{theorem}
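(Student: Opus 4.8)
The plan is to treat this as an assembly theorem that stitches together Lemmas~\ref{A5LemmaIntervall1E} through~\ref{A5LemmaIntervall6E}. First I would invoke the symmetry of Algorithm~$\Alg$ remarked upon at the start of the analysis: since the trajectories of $R_1$ and $R_2$ are mirror images, every exit on one half of the disk has a symmetric counterpart on the other half with identical evacuation time, so it suffices to restrict attention to exits on the arc~$\wideparen{II^{\prime}}$ (with $R_1$ finding the exit and $R_2$ being picked up). This is the reduction that makes the six-part partition exhaustive.

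Next I would recall the partition of $\wideparen{II^{\prime}}$ into the six consecutive pieces $\wideparen{IE_1}$, $\wideparen{E_1E_2}$, $\wideparen{E_2E_3}$, $\wideparen{E_3^{\sim}E_4}$, $\wideparen{E_4E_5}$, $\wideparen{E_5^{\sim}I^{\prime}}$, and apply the corresponding lemma to each piece. Each lemma asserts that any global worst-case exit lying on its arc must in fact sit at one of the arc's endpoints among $\{E_1,E_2,E_3^{\sim},E_4,E_5^{\sim}\}$; the interior points and the remaining boundary points $I$ and $I^{\prime}$ are ruled out (the former by Lemma~\ref{A5LemmaIntervall1E}, which notes the trivial evacuation time $1$ at $I$, and the latter by the explicit comparison at the end of Lemma~\ref{A5LemmaIntervall6E}). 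Since the six arcs cover $\wideparen{II^{\prime}}$, a global worst-case exit on this half must land in one of them, hence at one of the five listed points.

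The one genuinely delicate step is the treatment of the half-open arcs and their artificial endpoints $E_3^{\sim}$ and $E_5^{\sim}$. Because the arcs $\wideparen{E_3^{\sim}E_4}$ and $\wideparen{E_5^{\sim}I^{\prime}}$ were deliberately defined to \emph{exclude} $E_3$ and $E_5$, I would argue that the supremum of the evacuation time over each such half-open arc is nonetheless attained within the closed candidate set. Here I would reuse the observation (carried over from~\cite{A4} and recorded in the text preceding Lemmas~\ref{A5LemmaIntervall4E} and~\ref{A5LemmaIntervall6E}) that the evacuation time at the artificial point $E_3^{\sim}$ is no smaller than at $E_3$, and likewise for $E_5^{\sim}$ versus $E_5$, since otherwise $R_1$ could simply simulate the more favorable behavior. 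This guarantees that replacing $E_3,E_5$ by $E_3^{\sim},E_5^{\sim}$ in the candidate list loses no worst case, so the closure issue is resolved without any additional limiting argument.

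Finally, I would note that $E_3$ and $E_3^{\sim}$ (respectively $E_5$ and $E_5^{\sim}$) coincide as points on the boundary but differ only in the bookkeeping of $R_1$'s movement at the instant of discovery, so listing $E_3^{\sim}$ and $E_5^{\sim}$ captures both. Collecting the surviving candidates from all six lemmas then yields exactly $\{E_1,E_2,E_3^{\sim},E_4,E_5^{\sim}\}$, completing the proof. I expect the main obstacle to be purely expository rather than computational: making the half-open/artificial-point argument airtight so that no worst case escapes through the excluded boundary between consecutive arcs.
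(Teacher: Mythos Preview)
Your proposal is correct and mirrors the paper's own treatment: Theorem~\ref{SatzA5E} is stated as a direct summary of Lemmas~\ref{A5LemmaIntervall1E}--\ref{A5LemmaIntervall6E}, with the symmetry reduction, the six-arc partition, and the artificial-point observations for $E_3^{\sim}$ and $E_5^{\sim}$ all established in the surrounding text exactly as you describe. Your write-up is, if anything, more explicit than the paper about the closure issue at the half-open endpoints, but the underlying argument is identical.
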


To determine the evacuation time for our algorithm, we simply take the maximum of the evacuation times for these candidates. This leads to our stated upper bound and main result.
\begin{theorem}\label{EZAlg5E} Algorithm $\Alg\left(p_1^{*}, \alpha_1^{*}, d_1^{*}, p_2^{*}, \alpha_2^{*}, d_2^{*}\right)$ needs time at most $5.6234$ to evacuate two robots via an unknown exit on the boundary of the closed unit disk.
\end{theorem}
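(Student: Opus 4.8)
The plan is to treat this statement as a direct corollary of Theorem~\ref{SatzA5E}. That theorem already reduces the search for a global worst-case exit from the entire arc $\wideparen{II^{\prime}}$ to the finite set $\{E_1, E_2, E_3^{\sim}, E_4, E_5^{\sim}\}$. Since the evacuation time of the algorithm is by definition the supremum of the evacuation times over all exit positions, and since by symmetry it suffices to consider exits on one half of the disk, this evacuation time equals the maximum of the evacuation times over just these five candidate points. Hence it remains only to evaluate the evacuation time at each candidate and to check that the largest of the five values does not exceed $5.6234$.

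For each candidate I would apply the meeting protocol to pin down the meeting point $M$ and thus the evacuation time $t_0 + 2t$. For $E_1$ the meeting point is $C_1^{\prime}$, so $R_1$ reaches the exit at time $t_0 = 1 + |\wideparen{IE_1}|$ and, since $M = C_1^{\prime}$ sits at arc-length $p_1$ from $I$ on $R_2$'s side, the pickup time $t = |\overline{E_1 C_1^{\prime}}|$ is fixed directly, yielding an evacuation time of $1 + 2p_1 - |\wideparen{IE_1}|$. For the artificial points $E_3^{\sim}$ and $E_5^{\sim}$, robot $R_1$ finds the exit immediately after completing its first, respectively second, cut, so $t_0 = 1 + p_1 + 2d_1$, respectively $t_0 = 1 + p_2 + 2d_2$, and the meeting point lies on the boundary; there the meeting protocol reduces to solving an equation of the form $x + 2\sin((x+y)/2) = y$ for the arc-length $y = |\wideparen{IM}|$ of the meeting point. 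The evacuation times at $E_2$, $E_4$ and $E_5^{\sim}$ were already computed in Lemmas~\ref{A5LemmaIntervall2E}, \ref{A5LemmaIntervall4E} and~\ref{A5LemmaIntervall6E} as approximately $5.62336$; I would compute the remaining values at $E_1$ and $E_3^{\sim}$ in the same way and collect all five.

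Taking the maximum over the five candidates then gives a worst-case evacuation time of roughly $5.62336$, below the claimed bound of $5.6234$, which completes the proof. The main obstacle is numerical rather than conceptual: the meeting-protocol equations $x + 2\sin((x+y)/2) = y$ are transcendental and have no closed-form solution, and the binding candidates cluster extremely close to the target (the local-search parameters were tuned precisely to balance them near $5.62336$). The delicate part is therefore to certify that the numerically obtained evacuation times are genuinely at most $5.6234$ rather than merely approximately so --- ideally via verified root enclosures or interval arithmetic --- so that the inequality holds rigorously and is not an artifact of rounding.
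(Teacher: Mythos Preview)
Your approach matches the paper's exactly: invoke Theorem~\ref{SatzA5E} to reduce to the five candidates, compute the evacuation time at each via the meeting protocol, and take the maximum. Two of your computational set-ups are off, however. First, for $E_5^{\sim}$ robot $R_1$ has already completed \emph{both} cuts before finding the exit, so $t_0 = 1 + p_2 + 2d_1 + 2d_2$, not $1 + p_2 + 2d_2$. Second, and more importantly, the meeting point $M_3^{\sim}$ for the exit at $E_3^{\sim}$ does \emph{not} lie on the boundary: as noted in the proof of Lemma~\ref{A5LemmaIntervall4E}, the parameters are chosen so that $M_3^{\sim}$ lies on the segment $\overline{C_2^{\prime}P_2^{\prime}}$, i.e.\ robot $R_2$ is picked up while already performing its second cut. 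Hence the meeting-protocol equation for $E_3^{\sim}$ is not of the boundary form $x + 2\sin((x+y)/2) = y$ but must be set up in terms of $R_2$'s position along the cut; the paper records $|\overline{C_2^{\prime}M_3^{\sim}}| \approx 0.16125$. With these two corrections your outline goes through verbatim and yields the same five values (all below $5.62336$) as the paper.
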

\begin{proof} For exit $E_1$ we have $\vert\wideparen{IE_1}\vert\approx0.629973871925$ and for the corresponding meeting point $M_1$ we have $\vert\wideparen{IM_1}\vert=2.62666582851$, which results in an evacuation time of less than~$5.62335779$. For the second worst-case candidate $E_2$ we have $\vert\wideparen{IE_2}\vert\approx2.590020657077$ and for the corresponding meeting point $M_2$ we have $\vert\wideparen{IM_2}\vert=2.62666582851$, which results in an evacuation time of less than~$5.62335779$. For exit $E_3^{\sim}$ we have $\vert\wideparen{IE_3^{\sim}}\vert=2.62666582851$ and for the corresponding meeting point $M_3^{\sim}$ we have $\vert\overline{C_2^{\prime}M_3^{\sim}}\vert\approx0.161251676967$, which results in an evacuation time of less than~$5.62335779$. For the fourth worst-case candidate $E_4$ we have $\vert\wideparen{IE_4}\vert\approx2.972352082515$ and for the corresponding meeting point $M_4$ we have $\vert\wideparen{IM_4}\vert=2.97374843355$, which results in an evacuation time of less than~$5.62335779$. For exit $E_5^{\sim}$ we have $\vert\wideparen{IE_5^{\sim}}\vert=2.97374843355$ and for the corresponding meeting point $M_5^{\sim}$ we have $\vert\wideparen{IM_5^{\sim}}\vert=3.141494005121$, which results in an evacuation time of less than~$5.62335778$. The maximum of these evacuation times is limited from above by $5.6234$ and we have proved our stated upper bound on the problem of the two robot evacuation from the closed disk with face-to-face communication.
\end{proof}
%
%
%
\bibliographystyle{splncs04}
\bibliography{Bibliography}
%
\end{document}